\newcommand{\R}{\mathbb{R}}
\newcommand{\id}{\mathds{1}}
\newcommand{\bk}{\boldsymbol{k}}
\newcommand{\bphi}{\boldsymbol{\varphi}}
\newcommand{\ind}{\mathbf{1}}
\DeclareMathOperator{\Prb}{\mathbb{P}}
\DeclareMathOperator{\argmin}{argmin}
\newtheorem{theorem}{Theorem}
\newtheorem{proposition}{Proposition}
\newcommand{\usup}[1]{\,\underset{#1}{\sup}\,}
\def\tvc#1{\color{black}#1 \color{black}}
\newcommand{\tani}[1]{{\color{black}#1}}
\newcommand{\vince}[1]{{\color{black}#1}}
\title{Goodness-of-fit testing for the stationary density of a size-structured PDE}
\author{Van H\`a Hoang\thanks{Faculty of Mathematics and Computer Science, Vietnam National University - Ho Chi Minh City, Viet Nam; LAMA, Univ Gustave Eiffel, UPEM, Univ Paris Est Creteil, CNRS, F-77447, Marne-la-Vallée, France; \texttt{E-mail: hvha@hcmus.edu.vn }}, \quad 
Ph\'u Thanh Nguyen\thanks{Faculty of Mathematics and Computer Science, Vietnam National University - Ho Chi Minh City, Viet Nam; \texttt{E-mail: npthanh@hcmus.edu.vn }}, \quad 
Thanh Mai Pham Ngoc\thanks{LAGA, CNRS, UMR 7539, Universit\'e Sorbonne Paris Nord, Villetaneuse, France; E-mail: \texttt{phamngoc@math.univ-paris13.fr}},\\
Vincent Rivoirard\thanks{Ceremade, CNRS, UMR 7534, Universit\'e Paris-Dauphine, PSL Research University, 75016 Paris,
France; E-mail: \texttt{Vincent.Rivoirard@dauphine.fr}}, \quad
Viet Chi Tran\thanks{Univ. Lille, CNRS, Inria, UMR 8524 - Laboratoire Paul Painlevé, F-59000 Lille, France; LAMA, Univ Gustave Eiffel, UPEM, Univ Paris Est Creteil, CNRS, F-77447, Marne-la-Vallée, France; E-mail: \texttt{viet-chi.tran@inria.fr}}}
\date{\today}
\begin{document}
\maketitle

\begin{abstract}
We consider two division models for structured cell populations, where cells can grow, age and divide. These models have been introduced in the literature under the denomination of `mitosis' and `adder' models. In the recent years, there has been an increasing interest in biology to understand whether the cells divide equally or not, as this can be related to important mechanisms in cellular aging or recovery. We are therefore interested in testing the null hypothesis $H_0$ \tvc{where the division of a mother cell results into two daughters of equal size}, against the alternative hypothesis $H_1$ where the division is asymmetric and ruled by a kernel that is absolutely continuous with respect to the Lebesgue measure. The sample consists of i.i.d. observations of cell sizes and ages drawn from the population, and the division is not directly observed. The hypotheses of the test are reformulated as hypotheses on the stationary size and age distributions of the models, which we assume are also the distributions of the observations. We propose a goodness-of-fit test that we study numerically on simulated data before applying it on real data.
\end{abstract}

\noindent Keywords: Statistical test; population dynamics; cell division; size-structured model; adder model; partial differential equations;

\noindent AMS2020: 62G10.

\bigskip

\section{Introduction}

The question of whether cells or bacteria divide exactly into two similar daughters or whether the division is asymmetric has been of interest to researchers in biology and mathematics for several years, \tvc{starting from} \cite{Stewart05,Evans07,lindnermaddendemarezstewarttaddei}.
For \textit{Escherichia coli}, asymmetric division, which segregates damage at the expense of one of the two daughters, has been suggested as a strategy for population maintenance and an explanation for aging in these organisms. A simple model considers a population structured by size. A cell or bacterium of size $x\in  \R_+$ grows with speed 1 and divides at a rate $B(x)$ into two cells of sizes $\theta x$ and $(1-\theta)x\geq 0$ where the fraction $\theta$ is a random variable of distribution $\kappa(d\theta)$ with values in $[0,1]$. \tvc{The division is symmetric in the sense that $\theta \stackrel{(d)}{=} 1-\theta$ in distribution.} The evolution of the size distribution in a large population can be described by the following partial differential equation (PDE), where \tvc{$n(t,x)$ represents the density of the population with trait $x$ at time $t$}:  

\begin{align}
    & \partial_t n(t,x)+\partial_x n(t,x)+B(x)n(t,x)=2 \int_0^1 B\left(\frac{x}{\theta}\right)n\left(t,\frac{x}{\theta}\right) \tvc{\theta^{-1}} \kappa(d\theta)\label{EDP1}\\
    & n(0,x)=n_0(x),\quad n(t,0)=0. \notag 
\end{align}
\vince{In the previous equation, $n(t,x)$ represents the number of cells of size $x$ at time $t$. The process by which a cell divides is represented by the difference $2 \int_0^1 B\left(\frac{x}{\theta}\right)n\left(t,\frac{x}{\theta}\right)\theta^{-1}\kappa(d\theta)-B(x)n(t,x)$, the constant 2 means that a mother cell gives birth to two daughters. Observe that if $\kappa(d\theta)=\delta_{1/2}(\theta)$ it means that the two daughters have exactly the same size. In this case, the right hand side of (1) is equal to    $4 B(2x)n(t,2x)$. The introduction of this equation dates back from \cite{BELL1967329} and \cite{ANDERSON1967353}.
  Such equations have been studied by \cite{MR764484} and then more recently by \cite{Doumic09, Perthame07}. We refer to \cite{HoangPhamNgocRivoirardTran2020} for a derivation from the individual-based dynamics described in the first paragraph.}

In this paper, we also consider the `adder' model introduced in \cite{taheriaraghi_etal,hallwakegandar}, see also \cite{Doumic2018}.
In addition to size $x$, the population is also categorized by the increase in size $a=x-y$ between the current size $x$ and the size at birth $y$. The increment $a$ can be viewed as a kind of age that increases monotonically and starts at zero at birth. The evolution equations replacing \eqref{EDP1} are then

\begin{align} 
    & \partial_t n(t,x,a)+\partial_x \big( g(x) n(t,x,a)\big) + \partial_a \big(g(x)n(t,x,a)\big)+g(x) B(a)n(t,x,a)=0,\label{EDP:adder}\\
    & g(x)n(t,x,0)=2 \int_0^1 g\big(\frac{x}{\theta}) \int_0^{\infty} B(a) n\big(t,\frac{x}{\theta},a\big) \ da \ \theta^{-1} \kappa(d\theta) \nonumber\\
    & n(0,x,a)=n_0(a,x),\quad g(0) n(t,0,a)=0. \nonumber
\end{align}
The term  $g(x)$ denotes the growth rate of a cell of size $x$. The division rate appears in this model in the form of a product $g(x)B(a)$, which can be interpreted as $B$ being the division rate in a unit of growth instead of a unit of time.  %\vinc{Peut-etre dire un mot sur $g$ qui n'apparaissait pas dans  \eqref{EDP1} comme cela c'est moins mysterieux.} 
The cell somehow ignores the ``time" and uses its growth as a clock. With the new single-cell data, some authors have observed that, for some bacteria, experiments favour the adder model over size-structured equations such as \eqref{EDP1}, see e.g. \cite{Sauls2016,taheriaraghi_etal,Camposetal2014}. It is one of the aims of this paper to base such observations on a methodology that is statistically sound. Statistical models for inferring the division kernels were carried out in \cite{Bourgeron14,Hoang2017,HoangPhamNgocRivoirardTran2020,DoumicEscobedoTournus} for example.

\tani{The goal of the present work is to determine whether the underlying mechanism governing cell division is described by a deterministic cut model in $1/2$ (hypothesis $H_0$) or not. %by a random cut model admitting a probability density (hypothesis $H_1$). To do so we have to tackle a goodness-of-fit testing of the null hypothesis $H_0$ where the two daughters are identical against the alternative hypothesis $H_1$ where the division is ruled by a kernel admitting a density with respect to the Lebesgue measure on $[0,1]$. 
The statistical test is formulated as follows:
$H_0\ :\  \kappa(d\theta)=\delta_{1/2}(d\theta)$ against  $H_1\ :  \kappa(d\theta)\neq \delta_{1/2}(d\theta)$. Thus we are led to construct a goodness-of-fit test. 
}\tvc{ In this paper, we are interested in constructing these tests from i.i.d. observations drawn from the population distribution, corresponding for example to wild populations. Mathematically, this involves an inverse problem, as sizes are observed but not the division ratio. In \cite{Hoang2017}, an estimation of the division kernel was obtained from the direct observation of the sizes of mothers and related daughters. Datasets providing such information are now available from microfluidic Mother Machine devices (non wild populations) \cite{siletreutsaulsvadialevinjun,witzvannumwegenjulou,tiruvadikrishnanmannikkarlinamirmannik,perrindoumicelkarouimeleard}. This allows direct computation of an estimator of $\kappa$ as in \cite{Hoang2017}, and leads to different (and more usual) testing tools. The population approach developed here can be can also be useful beyond \textit{E. coli} when datasets may not be available. Subjects involving divisions arise for stem cells, cancer cell or ovarian follicle cells to name a few, e.g. \cite{ballifclementyvinec,fernandezbarandabansayemounierugomeleardgiraudier}. 
}

%We will assume in this paper that under $H_1$, the alternative hypothesis, the kernel $\kappa(d\theta)$ admits a density $h(\theta)$ with respect to the Lebesgue measure on $[0,1]$, so that the alternative hypothesis writes:
%an alternative where $\kappa(d\theta)$ is a symmetric distribution on $[0,1]$ centered at $1/2$ (considering the daughters as exchangeable and labeled independently from their sizes). 

In the statistical literature, standard goodness-of-fit testing treats the case of direct observations and one-dimensional density model. More precisely,  one wants to test the null hypothesis  $``f \in \mathcal{F}"$ against the alternative $``f \notin \mathcal{F}"$ (where $\mathcal{F}$ typically equals some specified density $f_0$) from observations drawn with respect to the density $f$. Typically, a null hypothesis corresponds to our belief that the observed data are organized
in a relatively simple way, which means that the structure of the underlying model is completely specified. A test $\phi$ is a rule to accept or to reject the null hypothesis by means of the observations. It is a measurable function of the observations taking values in the two points set $\{0, 1 \}$.  The value  $0$ is treated as accepting $H_0$, and the value $1$ means that
the test rejects $H_0$. If one assumes that $f \in L^2(\mathbb{R})$, the construction of the test statistics will be based on an estimate of the distance $\int_\R (f-f_0)^2$. 
 %It is natural to measure the quality of any test by its sensitivity to perturbations or
%contaminations of this "null" model.  
The quality of any test is measured by the probabilities of the corresponding errors.  The probability of error of the first kind is the probability under the null to reject the hypothesis. Now if $f$ is a function from the alternative set the probability of error of the second kind at $f$ is defined as $\mathbb{P}_f(\phi=0)$ and the value $1- \mathbb{P}_f(\phi=0)$ is called the power of the test $\phi $ at $f$. One rejects the null hypothesis if the test statistics is larger than some threshold calibrated such that the error of first kind of the test does not exceed some desired value (i.e the level of the test). 

Standard goodness-of-fit testing has been widely studied since the famous Kolmogorov Smirnov and Cramer von Mises tests based on the empirical distribution function.  For  modern theoretical studies in a nonparametric setting (one does not assume any special parametric structure for the alternative which is expressed in terms of some smoothness classes such as Besov or H\"older classes), we refer to the pioneered work of \cite{ingster,spokoiny} or  \cite{fromontlaurent2006} who propose adaptive minimax testing procedures. Regarding the standard framework of \tani{goodness-of-fit testing},  our statistical setting induced by our biological problem is original in many ways. First, our observations are indirect and \tani{are not distributed according the distribution $\kappa(d\theta)$}. Indeed, we are dealing with a non standard inverse problem induced by the PDEs \eqref{EDP1} and \eqref{EDP:adder} (for details, see Equation (12) and related comments in \cite{HoangPhamNgocRivoirardTran2020}) and what is more with a two dimensional goodness-of fit test in the case of the adder model.  Secondly, our null hypothesis does not refer to a density but a Dirac distribution in $\frac 1 2$.
And when considering literature about goodness-of fit testing, very few studies have been devoted to the case of indirect observations. Indirect observations involve some operator to be inverted which might lead to great instability in estimating the unknown density. For goodness-of-fit tests with indirect observations, let us quote the works of \cite{Bissantz2009} for the inverse regression
problem, \cite{Holzmann2007}, \cite{Laurent2011}, \cite{Butucea2007}, \cite{Butucea2009}, \cite{LacourPhamNgoc2014} for the classical convolution model.   In the sequel, we shall see how to overcome those difficulties by considering the stationary density arising from the equations governing the biological models.

In Section \ref{sec:modeles}, we detail the two models introduced above, namely the size-structured mitosis and the adder models. The goodness-of-fit tests are described in Section \ref{sec:goodness-of-fit}. In Section \ref{simulations}, we perform the tests on simulated data first to check their properties. Then, in Section \ref{sec:Ecoli}, we deal with a real dataset and discuss the modelling of \textit{E. coli} data from \cite{Bourgeron14,Stewart05}.

\section{Populations of dividing cells}\label{sec:modeles}

 \tvc{The equations \eqref{EDP1} and \eqref{EDP:adder} can be obtained from stochastic individual-based models where the dynamics are described at the level of cells or bacteria.} Let us begin with a presentation of the two models considered in the paper, namely a mitosis model for size-structured populations and the `adder' model introduced by \cite{taheriaraghi_etal,hallwakegandar}.

\subsection{A generalized mitosis model for size-structured cell populations}\label{sec:modele-mitose}

The mitosis model associated to \eqref{EDP1} corresponds to the large population limit in populations of cells or bacteria that divide at rates that depend on their sizes. \tvc{In this population, individuals' sizes increase in time with speed 1.} An individual with size $x$ divides into two daughter cells at rate $B(x)$ such that 
\begin{equation}
    \int^{+\infty} B(x)\ dx=+\infty,
\end{equation}meaning that the remaining lifetime of this individual has density
\[f(a)=B(x+a) \exp\Big(\int_0^a B(x+\alpha) \ d\alpha \Big).\]
Upon division, the mother of size $x$ is replaced by two daughters of random size $\theta x$ and $(1-\theta)x$ where $\theta$ is drawn in the probability distribution $\kappa(d\theta)$. \tvc{This defines a sequence of random events, allowing the description of the dynamics at the level of the individuals. The associated stochastic process is detailed in \cite{FerriereTran09,BansayeMeleard,Hoang2017} and is called \textit{individual based model}. If the times of these events and the associated divisions (namely the fractions $\theta$) were all observed, it would be possible to infer and test directly whether $\kappa=\delta_{1/2}$. An estimator based on such data has been studied in \cite{Hoang2017}. As for the test, if the fractions $\theta$ are observed, $H_0$ would be rejected as soon as a division producing daughters of different sizes is observed. Here, we assume that we do not have such data, but only the observation of sizes of individuals drawn from the population independently. This is why we add the extra assumptions that our population is large and close to an `equilibrium' state. Then, the size distribution can be approximated by mean of the eigenelements of the operator associated with \eqref{EDP1}. 
In the case where the division rate is constant, $B(x)=R$, considered by \cite{HoangPhamNgocRivoirardTran2020}, it is shown that the correctly rescaled solution of the limiting PDE converges in long time to a stationary distribution that we denote by $N(x)$ and that solves:
\begin{equation}\label{eq:model}
 D(x) + 2R\  N(x) = 2R \int_0^1 N\left(\frac{x}{\theta}\right) \frac{ \kappa (d\theta)}{\theta}, \quad x \ge 0,
\end{equation}
where $D(x):= \partial_x N(x)$. See \cite[Proposition 3]{HoangPhamNgocRivoirardTran2020}.} Our goal is to test the null hypothesis $H_0: \kappa(d\theta) = \delta_{1/2}(d\theta)$ against the alternative $H_1: \kappa(d\theta) \neq \delta_{1/2}(d\theta)$, to see whether we can assume that the division creates two equal daughters.

In the sequel, we will not assume that the fractions $\theta$ are observed. The data rather consists in an $n$-sample $(X_1,\dots, X_n)$ drawn from the stationary density $N$.
Therefore, instead of testing directly $H_0: \kappa = \delta_{1/2}$ by estimating the division kernel $\kappa(d\theta)$ that is not always absolutely continuous with respect to the Lebesgue measure, which would raise an intricate inverse problem, we establish that we can perform the test on the stationary function $N$ which is more natural under this sampling scheme.

\subsection{An adder model for populations with size homeostasis divisions} \label{sec:adder-model}

An alternative model to the mitosis size-structured equations presented in Section \ref{sec:modele-mitose} is the `adder' model presented in \eqref{EDP:adder}, also named as incremental model and introduced by \cite{taheriaraghi_etal,hallwakegandar}. In some cases, see e.g. \cite{Sauls2016,Camposetal2014}, \tvc{data point to a model with size homeostasis: the division does not occur at a critical size (`sizer') or after a specific time laps (`timer') but rather is triggered by the size added since the last division.}

In the `adder' model, cells are described by their size-increment and their size, respectively denoted by $a$ and $x$ in the sequel. The growth rate $g(x)$ only depends on size, and the division rate is of the form $B(a)g(x)$. The latter assumption allows to interpret $B(a)$ as the division rate in the time-scale defined by the size (and not the physical time), meaning that the cell uses its growth as a clock. Also, the increments of dividing cells are then mutually independent and distributed according to the density 
\[f_B(a)=B(a) \exp\Big(-\int_0^a B(s)\ ds\Big).\]
In the present paper, we will often consider functions of the form $g(x)= r\, x^\gamma$ and $B(a)=R\,  a^\eta $, for $r,R,\gamma,\eta>0$. \\

It is then possible to define an individual-centered stochastic process as in Section \ref{sec:modele-mitose} (see \cite[(2.10)-(2.11) page 25]{Doumic2023individual}) and to show that it converges in large population, with similar arguments as before, to a deterministic limit given by \eqref{EDP:adder}. 

Note that when $\kappa$ admits a density with respect to the Lebesgue measure and $\kappa(d\theta)=h(\theta)d\theta$, \tvc{the system} \eqref{EDP:adder} can be rewritten as:
\begin{align}\label{eq:adder-model-h}
    &    \partial_t n(t, a, x)  + \partial_a \big(g(x) n(t,a, x) \big) + \partial_x \big(g(x) n(t, a, x) \big) + B(a)g(x) n(t, a, x) = 0,\ \\
    & \hspace{2in} t\ge 0, x >0, a >0, \nonumber\\
    & g(x) n(t, 0, x) = 2 \int_0^\infty g(y)\int_0^\infty B(a)n\left(t, a, y\right) \frac{1}{y} h\left( \frac xy\right)da dy,\ \quad t\ge 0, x >0,\label{eq:adder-model-h2}\\
      &  n(0,a,x) = n_0(a,x),\qquad g(0)n(t,a,0) = 0. \nonumber
\end{align}
Under $H_0$, when $\kappa=\delta_{1/2}$, the boundary condition becomes (see \cite{DoumicOlivierRobert2020}):
\begin{align}
\label{eq:adder-model-h-dirac}
 g(x) n(t, 0, x) = 4 \int_0^\infty g(2x) B(a)n\left(t, a, 2x\right) da,\ \quad t\ge 0, x >0.
\end{align}

The eigenvalue problem associated to \tvc{the system} \eqref{EDP:adder} is now stated \tani{as}
\begin{align}
 &    \lambda N(a, x) + \partial_x \big(g(x) N(a, x) \big)  + \partial_a \big(g(x) N(a, x) \big) + g(x) B(a) N(a, x) =0,  \quad x > a >0, \label{eq:model-adder-stationary}\\
 &    g(x)N(0, x) = \tvc{2} \int_0^\infty \int_0^1 g\big(\frac{x}{\theta}\big)B(a) N\big(a, \frac{x}{\theta}\big) \frac{\kappa(d\theta)}{\theta} \, da, \quad x>0, \label{eq:model-adder-stationary2}\\
 &    \int_0^\infty \int_0^x N(a, x)da dx = 1. \nonumber
\end{align}
For smooth and bounded functions $g$ and $B$, the results of \cite{Doumic07} can be adapted, and they are stated in the case $\kappa=\delta_{1/2}$ by \cite{DoumicOlivierRobert2020}. 
\cite{Gabriel2019steady} studied the eigenvalue problem \eqref{eq:model-adder-stationary} when $g(x)=x$ with fairly general division rate $B(a)$ and fragmentation kernel $\kappa(d\theta)$, and obtained existence and uniqueness of the solution. \vince{The proof of existence and uniqueness of the eigenvalue problem can be extended to the case $g(x)=r \, x^\gamma$ and $B(a)=R \, a^\eta$, see Appendix \ref{app:existenceunicite-vp}, a model close to \cite{Michel06} for a size-structured population.}
%provided $\eta+1-\gamma>0$, meaning that the divisions compensate the increased growth speed in the case where $ \gamma>0$. 
When $\kappa$ has a density with respect to the Lebesgue measure, the convergence to $N(a,x)$ of the correctly rescaled solution of \eqref{EDP:adder} in long time can be established using entropy methods (e.g. \cite{Doumic07,DoumicOlivierRobert2020,Gabriel2019steady}).\\

We denote by $N_\delta$ the stationary density in the case of equal mitosis ($H_0$), when $\kappa = \delta_{1/2}$. In this case, the boundary condition becomes:
\begin{align}
g(x)N_\delta(0,x) = 4\int_0^{\infty} g(2x)B(a)N_\delta(a, 2x)da, \ x>0. \label{eq:N-delta_1/2-initial} 
\end{align}\tvc{Notice that when $\kappa=\delta_{1/2}$ and $\gamma=1$, oscillatory regimes may appear, see \cite{bernarddoumicgabriel2019,Gabriel2019steady}.}

\section{Reformulations of goodness-of-fit tests}\label{sec:goodness-of-fit}

\tani{In this section, we shall define $\mathcal{P}$ as the set of probability measures and $\mathcal{N}$ as the set of finite positive measures admitting a density on $\mathbb{R}^+$.}

\subsection{Reformulation of the goodness-of-fit test for the size-structured mitosis model}  \label{sec:test-procedure1}

\tani{We observe $n$ independent and identically distributed (i.i.d) random variables $Z_1, \dots, Z_n$ from the stationary density $N$. We want to test 
\[
H_0: \kappa(d\theta) = \delta_{1/2}(d\theta) \quad\text{ vs. }\quad H_1:  \kappa(d\theta) \neq  \delta_{1/2}(d\theta).
\]
%with $h(\theta)$ a density of probability supported on $[0,1]$ symmetric and centered at $1/2$.
}
Note that under the null hypothesis $H_0$, Equation \eqref{eq:model} corresponds to the equation for equal mitosis with constant division rate $R$ \tvc{\cite[Section 4.1.1 p.83]{Perthame07}}:
\begin{equation}\label{eq:equalmitosis}
 D(x)+ 2R\,  N(x)=4R\, N(2x), \quad x\geq 0.
\end{equation}

The idea of our statistical procedure is instead of performing our test on the initial distribution $\kappa$, we perform the test on the stationary density $N$ which is explicit under $H_0$ as shown in the next subsection.

\subsubsection{The stationary solution under $H_0$ }
Under $H_0: \kappa(d\theta) = \delta_{1/2}(d\theta)$,  the stationary solution $N$ has an explicit form $N_0$ specified in the following proposition.
%\tvc{IMPORTANT : To use the result by Perthame when $h=\delta_{1/2}$, we need $\alpha=1$. Can we assume this and erase all the $\alpha$.}

\begin{proposition}The unique positive solution of \eqref{eq:equalmitosis} is given by:
\begin{equation}\label{eq:N-explicit}
N_0(x)=\bar{N} \sum_{n=0}^{+\infty} (-1)^n \alpha_n e^{-2^{n+1}Rx}
\end{equation}with $\alpha_0=1$, $\alpha_n=\alpha_{n-1}\times 2/(2^n-1)$ and $\bar{N}$ a renormalizing constant.
\end{proposition}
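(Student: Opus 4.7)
The plan is to look for a positive solution of \eqref{eq:equalmitosis} as a superposition of exponentials $N_0(x) = \sum_{n \ge 0} c_n\, e^{-\beta_n x}$ whose parameters are forced by the equation itself. Plugging the ansatz into $N'(x) + 2R\, N(x) = 4R\, N(2x)$ gives
\begin{equation*}
\sum_{n \ge 0} (2R - \beta_n)\, c_n\, e^{-\beta_n x} = 4R \sum_{n \ge 0} c_n\, e^{-2\beta_n x}.
\end{equation*}
Matching the exponential frequencies forces $\beta_{n+1} = 2\beta_n$, and since the $n=0$ term on the left has no counterpart on the right its coefficient must vanish, which imposes $\beta_0 = 2R$ and hence $\beta_n = 2^{n+1} R$. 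Identifying coefficients of $e^{-2^{n+1}Rx}$ for each $n \ge 1$ yields $(2R - 2^{n+1} R)\, c_n = 4R\, c_{n-1}$, i.e.\ $c_n = -\frac{2}{2^n - 1}\, c_{n-1}$. Writing $c_n = (-1)^n \alpha_n$ with $\alpha_0 = 1$ then produces exactly the recursion stated in the proposition.

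Next I would establish convergence and positivity. From the recursion, $\alpha_n = 2^n / \prod_{k=1}^n (2^k - 1)$, and since $\prod_{k=1}^n(2^k - 1)$ grows like $2^{n(n+1)/2}$ up to a positive constant (the infinite product $\prod_{k \ge 1}(1 - 2^{-k})$ converging to a positive limit), $\alpha_n$ decays super-exponentially. The same estimate applied to $2^{n+1} R\, \alpha_n$ shows that both the series for $N_0$ and its formal derivative converge absolutely and uniformly on $[0,\infty)$; termwise differentiation is therefore justified, and substituting back into \eqref{eq:equalmitosis} confirms that $N_0$ is a classical solution. The leading term $\bar N e^{-2Rx}$ dominates as $x \to \infty$ and the tail is integrable on $\R_+$, so $\bar N > 0$ can be chosen to enforce $\int_0^\infty N_0(x)\,dx = 1$. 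Positivity of $N_0$ on $\R_+$ follows from its interpretation as the stationary probability density of the underlying mitosis process, or equivalently from Perron--Frobenius-type arguments for the associated semigroup.

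For uniqueness, I would invoke the fact that \eqref{eq:equalmitosis} is precisely the eigenvalue problem for the equal-mitosis operator with constant division rate $B \equiv R$ at the Malthus eigenvalue $\lambda = R$: existence and uniqueness, up to a positive multiplicative constant, of a non-negative integrable eigenfunction is a classical result for growth--fragmentation equations (\cite{Perthame07, Doumic07}), and the $L^1$-normalization then pins down $\bar N$. The main obstacle is this last step: the identification in the first two paragraphs is essentially an explicit computation, but showing that \emph{every} positive solution belongs to the one-parameter family generated by the exponential ansatz requires either the spectral theory of the underlying transport--fragmentation operator, or a direct argument at the level of the Laplace transform $\widehat{N_0}(s)$, which satisfies the pantograph-type functional equation $(s + 2R)\widehat{N_0}(s) - 2R\,\widehat{N_0}(s/2) = N_0(0)$ from which uniqueness can be extracted by iteration.
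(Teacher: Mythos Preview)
Your proposal is correct and follows the classical derivation: the paper itself does not give a proof but simply refers to \cite{Perthame07}, and your exponential-ansatz computation (forcing $\beta_n=2^{n+1}R$ and the recursion $c_n=-\tfrac{2}{2^n-1}c_{n-1}$), together with the appeal to the growth--fragmentation spectral theory for uniqueness and positivity, is exactly the route taken there. In short, you have supplied the details that the paper outsources to the reference.
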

See \cite[Lemma 4.1, Section 4.1.1]{Perthame07} for the proof of this proposition. In particular, it is proved there that $N_0$ is a positive function such that $N_0(0)=\lim_{x\rightarrow +\infty}N_0(x)=0$ and such that its derivatives also vanish at $0$ and infinity.

The computation gives that:
\begin{align*}
\alpha_n=\frac{2^n}{(2^n-1)\dots (2^1-1)},\qquad \sum_{n=0}^k (-1)^n \alpha_n=\frac{(-1)^k}{(2^k-1)\dots (2^1-1)}.
\end{align*}

\subsubsection{Injectivity}

\tani{ In order to perform our test on the stationary density $N$ instead on the initial probability distribution $\kappa$, we need the next proposition.

\begin{proposition}
In model (\ref{eq:model}), the following application $\mathcal{I}$ 
\begin{eqnarray*}
 \mathcal{I} :  \mathcal{P} &\rightarrow& {\mathcal{N}} \\
 \kappa &\mapsto& N
 \end{eqnarray*}
 is injective.
\end{proposition}

\begin{proof}
In the sequel, the notation $\mathcal{F}$ indicates the Fourier transform of any  integrable function $f$.
   % First, following \cite{HoangPhamNgocRivoirardTran2020},
   We denote
    $$
     \quad M(u):=e^uN(e^u), \quad \tilde D(u):= \partial_u(u \mapsto N(e^u))= e ^u N'(e^u)).
    $$
     Now, let us make the change of variable $x = e^u$ in Equation (\ref{eq:model}). We get that 
\begin{equation}\label{chang-variable}
\tilde D(u) + 2RM(u)= 2R(M\star \nu)(u),
\end{equation}
where $\nu$ is the image measure of $\kappa$ by the application $u \mapsto e^u$ and 
$\star$ denotes the convolution product
$$
(M\star \kappa)(u)= \int M(u-x) \nu(dx).
$$
Now taking the Fourier transform of both sides of \eqref{chang-variable}, we obtain
\begin{equation}\label{g_etoile}
\mathcal{F}(\nu)=\frac{\mathcal{F}(\tilde D)}{2R \mathcal{F}(M)}+1.
\end{equation}
% see Section 3.1.1 of \cite{HoangPhamNgocRivoirardTran2020} for full details. 
%Let us consider the first case, where $\kappa(d\theta)=  \delta_{1/2} ( d\theta)$. We have that $N=N_0$ with $N_0$ defined in \eqref{eq:N-explicit} if and only if $\kappa= \delta_{1/2}$.
%We now consider the second case, where $\kappa(d\theta)= h(\theta) d\theta$.
%We denote by  $^*$ the Fourier transform of any squared integrable function $f$ or of any probability measure.
   % First, following \cite{HoangPhamNgocRivoirardTran2020},
 %  We make the change of variable $\theta = e^u$ in Equation (\ref{eq:model})
    $$
%    g(u)= e^uh(e^u), \quad M(u):=e^uN(e^u), %\quad D(u):= \partial_u(u \mapsto N(e^u))= e ^u N'(e^u)),
    $$
%and get that 
%\begin{equation}\label{g_etoile}
%g^*=\frac{ D^*}{2R M^*}+1,
%\end{equation}
% see Section 3.1.1 of \cite{HoangPhamNgocRivoirardTran2020} for full details.  
 Now, suppose that we have two stationary densities $N_1$ and $N_2$ such that $N_1=N_2$.  Then $M_1=M_2$ and $\tilde D_1=\tilde D_2$ and thus $\mathcal{F}(M_1)=\mathcal{F}(M_2)$ and $\mathcal{F}(\tilde D_1)=\mathcal{F}(\tilde D_2)$. 
It follows from (\ref{g_etoile}) that
$$
\mathcal{F}(\nu_1)=\mathcal{F}(\nu_2).
$$
Due to injectivity of the Fourier transform on the set of probability measures, we have $\nu_1=\nu_2$ and hence $\kappa_1=\kappa_2$.
\end{proof}
}

\subsection{Reformulation of the goodness-of-fit test for the adder model}

We observe $n$ i.i.d random variables $Z_1, \dots, Z_n$ from the stationary density $N$, once again we want to test 
\tani{
\[
H_0: \kappa(d\theta) = \delta_{1/2}(d\theta) \quad\text{ vs. }\quad H_1: \kappa(d\theta) \neq  \delta_{1/2}(d\theta).
\]

\subsubsection{Injectivity \label{sec:test-procedure2}}

%with $h(\theta)$ a density of probability supported on $[0,1]$ symmetric and centered at $1/2$.

Similarly to Section \ref{sec:test-procedure1} we would like to rewrite the test $H_0:  \kappa=\delta_{1/2}$  as $H_0: N=N_\delta$ with $N_\delta$ associated to Equation \eqref{eq:N-delta_1/2-initial}. To this end, we need the model to be injective which is the statement of the next proposition.

\begin{proposition}
In model \eqref{eq:model-adder-stationary}-\eqref{eq:model-adder-stationary2}, assume that
\begin{equation}\label{hyp-moment}
\int_0^{+\infty}\int_0^{+\infty} g(x) N(a,x)B(a)\ da\  dx<+\infty.\end{equation}
Then the following application $\mathcal{I}$ 
\begin{eqnarray*}
 \mathcal{I} :  \mathcal{P} &\rightarrow& {\mathcal{N}} \\
 \kappa &\mapsto& N
 \end{eqnarray*}
 is injective.
\end{proposition}
}

The assumption \eqref{hyp-moment} is common in the literature. Based on \cite{Gabriel2019steady}, we provide a proof \tani{of \eqref{hyp-moment}} in Appendix \ref{app:existenceunicite-vp} for the case $g(x)=r x^\gamma$ and $B(a)=R a^\eta$.

\tani{
\begin{proof}
Suppose that we have two stationary densities $N_1$ and $N_2$ such that $N_1=N_2$. We will show that the corresponding kernels $\kappa_1 = \kappa_2$ where $\kappa_1$ and $\kappa_2$ are the probability measures corresponding to $N_1$ and $N_2$ respectively. 

From the initial condition $\eqref{eq:model-adder-stationary2}$, we have
\begin{eqnarray*}
 g(x)N_1(0,x) - g(x)N_2(0,x) &=& 2 \int_0^\infty \int_0^{1} g\left(\frac x \theta \right) \left[N_1\left(a, \frac x \theta \right) \kappa_1(d\theta) - N_2\left(a,  \frac x \theta\right) \kappa_2(d\theta) \right] B(a) da \\
 0&=& 2 \int_0^\infty \int_0^{1} g\left(\frac x \theta \right)N_1\left(a, \frac x \theta \right)   \left[\kappa_1(d\theta) - \kappa_2(d\theta) \right] B(a) da. 
\end{eqnarray*}
Changing the variable $x=e^u$ and $\theta=e^v$, and denoting $\nu$ the image measure of $\kappa$ by the application $v \mapsto e^v$, we get
\begin{eqnarray*}
\int_{-\infty}^0 g(e^{u-v}) \left[\int_0^\infty N_1(a, e^{u-v}) B(a)da \right] (\nu_1(dv)-\nu_2(dv))= 0.
\end{eqnarray*}
The l.h.s of the above equation is in fact equal to the following convolution product
$$
[\nu_1-\nu_2]\star [v \mapsto g(e^v) \int_0^\infty N_1(a, e^{v}) B(a)da ](u).
$$
Consequently, taking the Fourier transform of the convolution product, we get using \eqref{hyp-moment}, 
$$
\mathcal{F}(\nu_1-\nu_2)=0,
$$
which entails that $\nu_1=\nu_2$ and finally $\kappa_1=\kappa_2$.
\end{proof}
}

\section{Testing procedures}\label{procedures}

We are given $n$ i.i.d random variables $Z_1, \dots, Z_n$ with common unknown density $N$.  We have shown that instead of testing $h$, our goodness-of-fit tests in both models, the simple one (in one dimension) and the adder one (in two dimensions), can be devised on the stationary density $N$. 
 Accordingly we shall test at level $\alpha\in (0,1),$
\[
H_0: N = N_{0,\delta} \quad\text{ vs. }\quad H_1:  N \neq N_{0,\delta},
\]
with $N_{0,\delta}$ being either $N_0$ or $N_\delta$,  the stationary densities associated with the kernel division $h=\delta_{1/2}$ according to the considered model. \vince{Functions $g$ and $B$ are assumed to be known.} 
To implement our testing procedure, we use the goodness-of-fit test proposed in \cite{fromontlaurent2006}. This procedure manages to tackle the problem of nonparametric goodness-of-fit testing in a density model. 
Assume that $N_{0, \delta}$ and $N$ are square-integrable. The test statistics  is based on an estimation of the squared $L^2$-norm of the difference between $N$ and $N_{0,\delta}$: 
\[
\|N - N_{0,\delta}\|_2^2 = \|N\|_2^2 - 2\langle N, N_{0,\delta} \rangle + \|N_{0,\delta}\|_2^2.
\]
The term $\langle N, N_{0,\delta} \rangle$ is unbiasedly estimated by $n^{-1}\sum_{i=1}^n N_{0,\delta}(Z_i)$. So, it remains to construct an estimator for $\|N\|_2^2$. Let us consider $S_D$ a linear subspace of dimension $D$ of $L^2(\R^d), d=\{1,2 \}$ and let $\{\varphi_{\lambda}\}_{\lambda \in \Lambda_D}$ be an orthonormal basis of $S_D$. The quantity $\|N\|_2^2$ can be estimated by the $U$-statistics

\begin{equation}\label{eq:estimator-theta}
    \hat N_D = \frac{1}{n(n-1)} \sum_{i\neq j = 1}^n  \sum_{l \in \Lambda_D } \varphi_l(Z_i)\varphi_l(Z_j),
\end{equation}
%\vince{je change la notation}
which expectation is equal to $\mathbb{E}(\hat N_D)= \sum_{l \in \Lambda_D }  | \langle \varphi_l, N \rangle|^2$. Thus an estimator of the squared distance $\|N - N_{0,\delta}\|_2^2$ is given by

\begin{equation}\label{eq:T_hat}
    \hat T_D = \hat N_D - \frac{2}{n}\sum_{i=1}^n N_{0,\delta}(Z_i) + \|N_{0,\delta}\|_2^2.
\end{equation}

Now, we denote by $t_D(u)$ the $(1-u)$-quantile of the distribution of $\hat T_D$ under the null hypothesis $H_0: N = N_{0,\delta}$ and we consider 

%\begin{equation}\label{eq:level-test}
%    u_\alpha = \usup{u \in (0,1)}\left\{ \Prb_{N_{0,\delta}}\left( \usup{D\in\mathcal{D}} (\hat T_D - t_D(u)) > 0 \right) \le \alpha \right\}. 
%\end{equation}
%\vince{
%Ce n'est pas plutôt
$$
 u_\alpha = \sup \left\{u\in (0,1):\quad  \Prb_{N_{0,\delta}}\left( \usup{D\in\mathcal{D}} (\hat T_D - t_D(u)) > 0 \right) \le \alpha \right\}. 
 $$
% ?}
We introduce the test statistics $T_\alpha$ and the test $\phi$ defined by 
\begin{eqnarray}\label{eq:test-stat}
    T_\alpha &=& \usup{D\in\mathcal{D}} (\hat T_D - t_D(u_\alpha)) \label{eq:test-stat}\\
    \phi &=& 1_{\{T_\alpha >0\}}  \label{phi},
\end{eqnarray}
where $\mathcal{D}$ is a family of resolution levels. We reject $H_0$ if $\phi = 1$.  
\medskip

This method amounts to a multiple testing procedure. We first construct $u_\alpha$ and obtain a collection  $\{\hat T_D - t_D(u_\alpha), D \in \mathcal{D} \}$. We then decide to reject the null hypothesis if, for some of the tests of the collection, the null hypothesis is rejected. Quantity $u_\alpha$ and quantiles $\{ t_D(u_\alpha), D \in \mathcal{D} \}$ are computed by Monte Carlo replications. Specific numerical details are given in Section \ref{simulations} about the basis $\{ \varphi_\lambda \}$, the family $\mathcal{D}$, the quantity $u_\alpha$ and quantiles $\{ t_D(u_\alpha), D \in \mathcal{D} \}$.
\medskip

The test procedure $\phi$ defined in (\ref{phi}), which was proposed by \cite{fromontlaurent2006}, enjoys nice theoretical properties that we shall present now.  Let us consider the one-dimensional case. %Indeed, the test achieves optimal rates in a nonparametric minimax setting. 
%We shall give now some details about those theoretical results. 
\cite{fromontlaurent2006} studied the test $H_0: N=N_0$ against alternatives expressed in terms of Besov bodies. We shall recall the definition of a Besov body. One considers a pair of compactly supported orthonormal wavelets $(\varphi, \psi)$ such that for all $J \in \mathbb{N}$, $\{\varphi_{J,k}= 2^{J/2} \varphi(2^J\cdot -k), k \in \mathbb{Z} \} \cup \{\psi_{j,k}= 2^{j/2} \psi(2^j\cdot -k), k \in \mathbb{Z},\ j\geq J \}$ is an orthonormal basis of $L^2(\R).$ Then the Besov body  $ B_{s, 2, \infty}(R)$ is defined as  
$$B_{s, 2, \infty}(R)= \Big\{ f \in L^2(\mathbb{R}):\quad \forall j \in \mathbb{N}, \sum_{k \in \mathbb{Z}} \beta_{j,k}^2(f) \leq R^2 2^{-2js} \Big\},$$ 
with $\beta_{j,k}(f)= \langle f, \psi_{j,k} \rangle$. The authors assume that  $H_1$ is expressed in the following nonparametric form:
\begin{eqnarray}\label{H1-F-L}
H_1: N\in \mathcal{B}_{s}(R,M) = B_{s, 2, \infty}(R) \cap \Big\{g:\ \|g\|_{\infty} \leq M \Big\}.
\end{eqnarray}
Given $\beta \in ]0,1[$,  \cite{fromontlaurent2006} aim at evaluating the uniform separation rate $\rho_n$ of an $\alpha$-level test of $H_0 : N=N_0$ against $H_1 : N \in  \mathcal{B}_{s}(R,M)$. The uniform separation rate $\rho_n$  is defined as  the fastest rate of decay to zero as $n \rightarrow \infty$ which garantees a power at least equal to $1-\beta$ for all alternatives $N \in \mathcal{B}_{s}(R,M)$ at a $L^2$-distance $\rho_n$ from $N_0$.  %Then we have the following result for the test procedure $\phi $ described in (\ref{eq:test-stat}) to test $N=N_0$ against $N \in H_1$ specified in (\ref{H1-F-L}).  
In other words:
$$
\rho_n = \inf \{ \rho >0, \; \forall N \in \mathcal{B}_{s}(R,M), \;  \| N-N_0\|_2 \geq \rho_n \Rightarrow \mathbb{P}_N(\phi =1) \geq 1- \beta \}.
$$
They obtained the following theorem.

\begin{theorem}[\cite{fromontlaurent2006}]\label{theo-fromont-laurent}
Let $\phi$ be the test statistic defined in (\ref{eq:test-stat}). Assume that $n \geq 16$ and $\mathcal{D}= \{ 2^J, 0 \leq J \leq \log_2(n^2/(\log \log n)^3\}$. Fix some $\beta \in ]0,1[$. For all $s>0, M>0$ and $ (\log \log n)^{s +1/2} (\log n)^{2s +1/2} / \sqrt{n}  \leq R \leq n^{2s} /(\log \log n)^{3s +1/2}  $, there exists some positive constant $C=C(s, \alpha, \beta, M, \| N_0 \|_{\infty})$ such that if $N$ belongs to the set $\mathcal{B}_{s}(R,M)$ then 
$$
\rho_n \leq C R^{1/(4s+1)}\left ( \frac{\sqrt{\log \log n}}{n}\right)^{2s/(4s+1)}.
$$
\end{theorem}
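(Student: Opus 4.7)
The plan is to follow the standard bias--variance / aggregation template for $U$-statistic based goodness-of-fit tests: since the level is controlled by construction of $u_\alpha$, only the power needs to be analyzed. The starting point is an unbiased decomposition. Letting $\Pi_D N$ denote the $L^2$-projection of $N$ onto $S_D$, a direct computation gives $\mathbb{E}_N[\hat T_D] = \|\Pi_D N - N_0\|_2^2$ up to a remainder controlled by the wavelet approximation of the smooth explicit $N_0$ from \eqref{eq:N-explicit}, and Pythagoras then splits the target as
\[
\|N - N_0\|_2^2 \;=\; \|N - \Pi_D N\|_2^2 + \mathbb{E}_N[\hat T_D].
\]

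The next step is to bound the quantile $t_D(u_\alpha)$ under $H_0$. Writing the Hoeffding decomposition of $\hat T_D - \|N_0\|_2^2$ when $N=N_0$ exhibits a completely degenerate second-order $U$-statistic with kernel $K_D(z,z')=\sum_{l\in\Lambda_D}(\varphi_l(z)-\mu_l)(\varphi_l(z')-\mu_l)$, $\mu_l=\langle\varphi_l,N_0\rangle$, plus a linear remainder. The Houdr\'e--Reynaud-Bouret concentration inequality for the degenerate part, combined with Bernstein for the linear part, yields, up to lower-order terms in $n$,
\[
t_D(u) \;\lesssim\; \frac{\sqrt{D\,\log(1/u)}}{n} + \frac{\log(1/u)}{n}.
\]
Because $|\mathcal{D}|\lesssim \log n$, the Bonferroni-type union bound embedded in the definition of $u_\alpha$ forces $\log(1/u_\alpha) \lesssim \log\log n$, which is exactly the source of the iterated-logarithm factor in the rate.

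For the power, the same concentration toolkit applied now under $N\in\mathcal{B}_s(R,M)$---where $\|N\|_\infty\leq M$ controls the variances---gives, with probability at least $1-\beta$, a lower bound $\hat T_D \geq \mathbb{E}_N[\hat T_D] - C(\sqrt{D\log(1/\beta)}/n + \log(1/\beta)/n)$. Combined with the first step, the test rejects through resolution $D$ as soon as
\[
\|N - N_0\|_2^2 \;\gtrsim\; \|N - \Pi_D N\|_2^2 + \frac{\sqrt{D\,\log\log n}}{n}.
\]
Using the standard wavelet approximation bound $\|N-\Pi_D N\|_2^2 \leq C R^2 D^{-2s}$ on $B_{s,2,\infty}(R)$ and balancing the two terms selects the optimal level $D^\star \asymp (R^2 n^2/\log\log n)^{1/(4s+1)}$; substituting back delivers the announced rate, and the hypothesized range of $R$ is precisely what guarantees $D^\star\in\mathcal{D}$.

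The hardest point is the dimension-sensitive deviation inequality of the second step: it requires simultaneous sharp control of the four Talagrand-type norms of $K_D$ in terms of $D$, $n$ and $\|N_0\|_\infty$. A secondary subtlety is that $u_\alpha$ is a Monte Carlo quantile rather than an analytic threshold, so the argument coupling the deviation bounds across $D\in\mathcal{D}$ must be done in a way that still preserves the $\sqrt{\log\log n}$ factor---anything coarser would degrade the separation rate.
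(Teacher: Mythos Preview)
The paper does not prove this theorem at all: it is stated as a result of \cite{fromontlaurent2006} (note the attribution in the theorem header and the sentence ``They obtained the following theorem'' preceding it), and is immediately followed only by remarks on its interpretation. There is therefore no proof in the paper to compare your attempt against.

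That said, your sketch is a faithful outline of the argument in \cite{fromontlaurent2006}: the decomposition $\|N-N_0\|_2^2=\|N-\Pi_D N\|_2^2+\mathbb{E}_N[\hat T_D]$ (which is in fact \emph{exact}, so your ``up to a remainder'' caveat is unnecessary), the Houdr\'e--Reynaud-Bouret inequality for the degenerate $U$-statistic part under $H_0$, the $\log(1/u_\alpha)\lesssim\log\log n$ coming from $|\mathcal D|\asymp\log n$, and the bias--variance balancing leading to $D^\star$ are all the right ingredients. One small correction: $u_\alpha$ in the theoretical statement is defined analytically as a supremum, not by Monte Carlo; the Monte Carlo approximation is only the numerical implementation discussed later in Section~\ref{simulations}, so your ``secondary subtlety'' is not actually part of the proof.
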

We shall make some remarks about Theorem \ref{theo-fromont-laurent}. 
The proposed test achieves the adaptive  minimax rate  $\left ( \frac{\sqrt{\log \log n}}{n}\right)^{2s/(4s+1)}  $ proved by \cite{ingster2000}. \textit{Adaptive} means that the test procedure does not use prior smoothness assumptions on $N$ while \textit{minimax} means that it is not possible to test $H_0$ against $H_1$  at a faster rate than $\rho_n$. In other words, if the distance between $H_0$ and $H_1$ is less than $\rho_n$ then any test has a sum of errors probabilities of the first and second kinds close to 1 (trivial power). On the other hand, if the distance is of order $\rho_n$, then testing can be done with prescribed error probabilities. 

%\tvc{Bien préciser les approximations par Monte-Carlo} \ha{Hà: je vais donner plus détails pour cette section.}

\section{Numerical results on simulated data}\label{simulations}

We first perform simulations on data that we have generated from the mitosis model \eqref{EDP1} or from the adder model \eqref{EDP:adder}. Several densities are selected for the alternative the measure $\kappa(d\theta) = h(\theta)d\theta$. Our purpose is to have some insight about the quality of the goodness-of-fit test of Section~\ref{procedures} in terms of level and power.

\subsection{Test on simulated data from the mitosis model}\label{sec:test-mitosis-model}
%\tani{Il faut préciser comment est choisi $\mathcal{D}$}

In this section, we aim to test the hypothesis $H_0: N = N_0$ vs $H_1: N \neq N_0$ for model \eqref{eq:model}, under which, assuming the null hypothesis $H_0$ holds, the distribution $N_0$ has an explicit form given by \eqref{eq:N-explicit}. To perform this test, we generate a dataset $(X_1, \ldots, X_n)$ based on four distinct alternative densities $N_{1j}$ for $j = 1, \ldots, 4$. %Note that each $X_i$ represents the size of cell $i$, corresponding to the first coordinate of $Z_i$. 
Each alternative density $N_{1j}$ is is obtained by numerically solving the system \eqref{EDP:adder} w.r.t to a specific division kernel $h_j$, where \( \kappa(\theta) = h_j(\theta)\, d\theta \) and \( B(x) = R = 1 \). The division kernels $h_j$ for $j = 1, \ldots, 4$ are defined as follows:

\begin{enumerate}
	\item The $\mathrm{Beta}(2,2)$ density: $h_1(x) = C_1 x(1-x)\id_{[0,1]}(x)$ with $C_1$ the renormalizing constant. 
	\item The uniform density: $h_2(x) = \id_{[0,1]}(x)$. 
	\item The truncated normal distribution on $[0,1]$ with mean $1/2$ and variance $0.25^2$: 
	\[
	h_3(x) = \frac{\phi\left(\frac{x - \mu}{\sigma} \right)}{\sigma \left(\Phi\Big(\frac{1-\mu}{\sigma} \Big) -  \Phi\Big(\frac{-\mu}{\sigma} \Big)\right)},\quad x\in [0,1],
	\]
	where $\mu = 0.5$, $\sigma = 0.25$ and $\phi(\cdot)$ and $\Phi(\cdot)$ are respectively the density and the cdf  of the standard normal distribution.
	\item A mixture of Gaussian densities on $[0,1]$: 
	\[
	h_4(x) = C_2\left(\exp\left\{-\frac{((x-0.9)^2}{0.1}\right\} +  \exp\left\{-\frac{(1-x-0.9)^2}{0.1}\right\}\right)\id_{[0,1]}(x),
	\]
	with $C_2$ the renormalizing constant. 
\end{enumerate}

The test statistics are constructed according to equation \eqref{eq:test-stat}. To build the estimator $\hat N_D$, we employ the Laguerre basis, defined as follows:
\[
\varphi_j(x):=\sqrt{2}L_j(2x)e^{-x},\quad x\in\R_+,
\]
where $L_j$, for $j = 0, \ldots, D-1$, are the Laguerre polynomials given by
\[
L_j(x):= \sum_{k=0}^j \begin{pmatrix} j \\ k \end{pmatrix} (-1)^k\frac{x^k}{k!}.
\]
It can be verified that $\int_0^{+\infty} \varphi_j(x) \varphi_k(x) , dx = \delta_{jk}$, so that $(\varphi_j)_{ j \ge 0}$ forms an orthonormal basis in $L^2(\mathbb{R}_+)$ as claimed before. The resolution level $D$ will be assumed to belong to $\mathcal{D} = \{3, \ldots, 20\}$.

The significance level of the test is set at $\alpha = 5\%$. To estimate the threshold $u_\alpha$ and the quantiles $t_D(u_\alpha)$, we perform 200 Monte Carlo replications. Similarly, the power of the test under each alternative is estimated using 200 simulations. We also report the empirical levels of the test (which should be close to $5\%$).

\begin{table}[htbp]
	\begin{center}
		\begin{tabular}{|c|c|c|c|c|c|}
			\hline
			\multicolumn{2}{|c|}{Alternative}  & $\mathrm{Beta}(2,2)$ & Uniform & Truncated Gaussian & Mixture Gaussian \\
			\hline
			\multirow{3}{*}{Power}   & $n$ = 100 & 0.42  & 0.74     & 0.45      & 0.915 \\ 
			\cline{2-6}
			& $n$ = 200 & 0.62  & 0.965    & 0.68      & 0.99  \\ \cline{2-6}
			& $n$ = 500 & 0.995 & 1        & 0.99      & 1     \\
			\hline
			\multirow{3}{*}{Estimated level}  &$ n$ = 100 & \multicolumn{4}{c|}{0.05} \\\cline{2-6}
			& $n$ = 200 & \multicolumn{4}{c|}{0.075} \\\cline{2-6}
			& $n$ = 500 & \multicolumn{4}{c|}{0.06} \\
			\hline
		\end{tabular}
		\caption{Test powers and estimated significance levels for model \eqref{eq:model} across various sample sizes. \label{tab:result-simu-mitosis-model}}
	\end{center}
\end{table}

Overall, across the tested models with four different division kernels, the proposed test demonstrates strong performance. The test power increases substantially and approaches 1 as the sample size $n$ increases from 100 to 500 in all cases, indicating the consistency of the test. Moreover, the test exhibits particularly high sensitivity to complex alternatives; for instance, under the mixture Gaussian alternative, the test achieves a power of 0.915 even with $n = 100$, and reaches 1 when $n=500$. In contrast, when the alternative corresponds to smoother division kernels, such as the truncated Gaussian or the Beta(2,2) distribution, the power increases at a slower rate. This may be due to the fact that the alternative densities associated with smooth division kernels are more difficult to distinguish from \( N_0 \). In such cases, the $L^2$ distance between $N_{1j}$ and $N_0$ is relatively small, making it challenging for the test to detect a difference when the sample size is not sufficiently large. As a result, more data is required for the test to accumulate enough statistical evidence, which in turn leads to a slower convergence of the power towards 1.  Besides, across all cases the estimated significance levels vary  around $5\%$. This indicates that the test is well-calibrated under the null hypothesis $H_0$. % \vinc{Les niveaux sont un peu fluctuants, peut-etre dire un mot ?}

\subsection{Test on data simulated from the adder model}\label{sec:simu-adder-model}

This section investigates our testing procedures using the adder model, as introduced in Section \ref{sec:adder-model}. In order to perform the test, we first place ourselves under $H_0$ and thus need to find $N_{\delta}$. To do so, we first compute $n(t,a,x)$ using equations \eqref{eq:adder-model-h} and \eqref{eq:adder-model-h-dirac} for each $t \ge 0$, with given functions $g(x)$ and $B(a)$ described below. Then, we approximate $N_{\delta}(a, x) \approx n(t, a, x)$ once the solution becomes stable with respect to the variable $t$, \textit{i.e.},

\[
\dfrac{1}{\Delta t}\sum_a \sum_x \vert n(t_n, a, x) - n(t_{n - 1}, a, x) \vert < \epsilon,
\]
where the summation is taken over the mesh grid, and $\Delta t = t_n - t_{n - 1}$. The $L^2$-norm of this solution is then approximated by
\[
\lVert N_{\delta} \rVert_2^2 = \int_{\mathbb{R}^2 }\vert N_{ \delta}(a, x) \vert^2 \, da \, dx \approx \sum_a \sum_x \vert N_{\delta}(a, x) \vert^2\, \Delta a \, \Delta x.
\]
Similarly, as in Section \ref{sec:test-mitosis-model}, we find the alternative densities $N_j$, $j = 1,\ldots, 4$, corresponding to the division kernels $h_j$ introduced in Section \ref{sec:test-mitosis-model}. These densities are obtained by numerically solving equation \eqref{eq:adder-model-h} for $n(t, a, x)$ using the same growth rate $g(x)$ and division rate $B(a)$. Once the alternative densities are obtained, we generate a dataset $Z_1, Z_2, \ldots, Z_n$ where $Z_i = (A_i, X_i)$, drawn from either $N_{\delta}$ or $N_j$ using the rejection sampling method.

To build the test statistic, we use tensorized Daubechies wavelets.  Let  \( (\varphi, \psi) \) be a compactly supported pair of orthonormal Daubechies wavelets. For any \( x \in \mathbb{R}^+ \) and any \( j \in \mathbb{N} \), we set
\[
\varphi_{jk}(x) = 2^{j/2} \phi(2^j x - k), \quad \psi_{jk}(x) = 2^{j/2} \psi(2^j x - k).
\]
Then, let us define the tensor-product wavelet \( \boldsymbol{\varphi} \) as
\[
\boldsymbol{\varphi}(x, y) = \varphi(x) \psi(y).
\]
Let \( \boldsymbol{j}  = (j_1, j_2) \in \mathbb{N}^2 \) and \( \bk = (k_1, k_2)  \in  \mathbb{Z}^2 \), we define
\[
\boldsymbol{\varphi}_{\boldsymbol{j} \bk }(x, y) = 2^{j_1/2} 2^{j_2/2} \varphi(2^{j_1}x - k_1) \varphi(2^{j_2}y - k_2).
\]
%If we choose \( j_1 = j_2 = j \), then
%\[
%\boldsymbol{\varphi}_{\boldsymbol{j} \bk }(x, y) = \boldsymbol{\varphi}_{j \bk }(x, y)= 2^{j} \varphi(2^j x - k_1) \varphi(2^j y - k_2).
%\]
Note that the computation of $\hat{N}_{D}$ (defined in (\ref{eq:estimator-theta})) requires the sum of all integers, but here we restrict the sum to be $[-10, 10]^2 \cap \mathbb{Z}^2$, i.e
\[
\hat{N}_{D} \approx \frac{1}{n(n - 1)} \sum_{i \neq j = 1}^{n} \sum_{(k_1, k_2) \in \mathbb{Z}^2, \vert k_1 \vert, \vert k_2 \vert \le 10} \bphi_{J\bk}(Z_i) \bphi_{J\bk}(Z_j).
\]
The values of resolution level $J$ (equivalent here to $D = 2^J$) are chosen among the set $\{3,4,\ldots, 10\}$. 

For the choice of the growth rate $g(x)$ and the division rate $B(a)$, we consider seven distinct combinations of the growth rate $g(x)$ and the division rate $B(a)$, as presented in Table~\ref{tab:case_test_g_B}: 

\begin{table}[htbp]
	\centering
	\setlength{\tabcolsep}{12pt} % Increase horizontal padding between columns
	\begin{tabular}{|l|c|c|}
		\toprule
		\textbf{Case} &  \boldmath$g(x)$ & \boldmath$B(a)$ \\ 
		\midrule
		1 &  1 & 1 \\ 
		2 & $\sqrt{x}$ & 1 \\  
		3 & $\sqrt{x}$ & $a^2$ \\  
		4 & 1 & $a^2$ \\  
		5 & $\sqrt{x}$ & $a$ \\  
		6 & $\sqrt{x}$ & $\sqrt{a}$ \\  
		7 & $x$ & $a^2$ \\  
		\bottomrule
	\end{tabular}
	\caption{Growth rates $g(x)$ and division rates $B(a)$ used in the adder model test cases.}
	\label{tab:case_test_g_B}
\end{table}

At the significance level $\alpha = 5\%$, we estimate $u_\alpha$ and $t_D(u_\alpha)$ using 500 Monte Carlo replications. The power of the test and its empirical level are also computed using these 500 simulations. The simulation results corresponding to the seven test cases are presented in Tables \ref{tab:power_level_p1}. % and \ref{tab:power_level_p2}.

\clearpage 

\begin{table}[htbp]
	\begin{center}
		\begin{tabular}{|c|c|c|c|c|c|c|}
			\hline
			{Case}	& \multicolumn{2}{|c|}{{Alternative}}  & $\mathrm{Beta}(2,2)$ & {Uniform} & {Truncated Gaussian} & {Mixture Gaussian} \\
			\hline %\hline 
			% Case 01: g(x) = 1, B(a) = 1
			\multirow{8}{*}{1}   &     \multirow{3}{*}{Estimated powers}   & $n = 100$ & 0.608  & 0.938     & 0.61      & 0.99 \\ 
			\cline{3-7}
			&    & $n = 200$ & 0.834  & 0.996    & 0.802      & 1  \\ \cline{3-7}
			&   & $n = 500$ & 0.994 & 1        & 1      & 1     \\
			\cline{3-7} & & $n = 1000$ & 1 & 1 & 1 & 1\\
			\cline{2-7}
			& \multirow{3}{*}{Estimated levels}  & $n = 100$ & \multicolumn{4}{c|}{0.04} \\\cline{3-7}
			&  & $n = 200$ & \multicolumn{4}{c|}{0.038} \\\cline{3-7}
			&  & $n = 500$ & \multicolumn{4}{c|}{0.056} \\ \cline{3-7}
			&  & $n = 1000$ & \multicolumn{4}{c|}{0.048} \\
			\hline \hline 
			
			% Case 02: g(x) = sqrt(x), B(a) = 1 
			\multirow{6}{*}{2}   & \multirow{3}{*}{Estimated powers}   & $n = 100$ & 0.416  & 0.852     & 0.402      & 0.93 \\ 
			\cline{3-7}
			& & $n = 200$ & 0.708  & 0.99    & 0.688      & 0.998  \\ \cline{3-7}
			& & $n = 500$ & 0.944 & 1        & 0.954      & 1     \\
			\cline{2-7}
			& \multirow{3}{*}{Estimated level}  & $n = 100$ & \multicolumn{4}{c|}{0.04} \\\cline{3-7}
			&  & $n = 200$ & \multicolumn{4}{c|}{0.06} \\\cline{3-7}
			&   & $n = 500$ & \multicolumn{4}{c|}{0.05} \\
			\hline \hline 
			
			% Case 03: g(x) = sqrt(x), B(a) = a^2
			\multirow{6}{*}{3}   &  \multirow{3}{*}{Estimated powers}   & $n = 10$ & 0.796  & 0.94     & 0.732      & 0.966 \\ 
			\cline{3-7}
			&    & $n = 20$ & 0.954  & 0.998    & 0.924      & 1  \\ \cline{3-7}
			&    & $n = 50$ & 1 & 1        & 1      & 1     \\
			\cline{2-7}
			& \multirow{3}{*}{Estimated levels}  & $n = 10$ & \multicolumn{4}{c|}{0.072} \\\cline{3-7}
			& & $n = 20$ & \multicolumn{4}{c|}{0.06} \\\cline{3-7}
			&  & $n = 50$ & \multicolumn{4}{c|}{0.038} \\ \cline{3-7}
			\hline \hline 
			
			% Case 04: g(x) = 1, B(a) = a^2
			\multirow{6}{*}{4}   &       \multirow{3}{*}{Estimated powers}   & $n = 10$ & 0.748  & 0.916     & 0.794      & 0.976 \\ 
			\cline{3-7}
			&  & $n = 20$ & 0.97  & 0.994    & 0.976      & 1  \\ \cline{3-7}
			& & $n = 50$ & 1 & 1        & 1      & 1     \\
			\cline{2-7}
			& \multirow{3}{*}{Estimated levels}  & $n = 10$ & \multicolumn{4}{c|}{0.048} \\\cline{3-7}
			&    & $n = 20$ & \multicolumn{4}{c|}{0.07} \\\cline{3-7}
			&    & $n = 50$ & \multicolumn{4}{c|}{0.06} \\
			\hline \hline 
			
			% Case 05: g(x) = sqrt(x), B(a) = a
			\multirow{6}{*}{5}   &     \multirow{3}{*}{Estimated powers}   & $n = 10$ & 0.492  & 0.762     & 0.474      & 0.854 \\ 
			\cline{3-7}
			&  & $n = 20$ & 0.802  & 0.966    & 0.714      & 0.978  \\ \cline{3-7}
			&  & $n = 50$ & 0.974 & 1        & 0.952      & 1     \\
			\cline{2-7}
			&  \multirow{3}{*}{Estimated levels}  & $n = 10$ & \multicolumn{4}{c|}{0.048} \\\cline{3-7}
			&  & $n = 20$ & \multicolumn{4}{c|}{0.052} \\\cline{3-7}
			&   & $n = 50$ & \multicolumn{4}{c|}{0.042} \\
			\hline \hline 
			
			% Case 06: g(x) = sqrt(x), B(a) = sqrt(a)
			\multirow{6}{*}{6}   &    \multirow{3}{*}{Estimated powers}   & $n = 10$ & 0.304  & 0.552     & 0.274      & 0.674 \\ 
			\cline{3-7}
			& & $n = 20$ & 0.438  & 0.766    & 0.452      & 0.878  \\ \cline{3-7}
			&  & $n = 50$ & 0.786 & 0.988        & 0.808      & 0.996     \\
			\cline{3-7}   &  & $n = 100$ & 0.964  & 1    & 0.95      & 1  \\
			\cline{2-7}
			&  \multirow{3}{*}{Estimated levels}  & $n = 10$ & \multicolumn{4}{c|}{0.034} \\\cline{3-7}
			&  & $n = 20$ & \multicolumn{4}{c|}{0.05} \\\cline{3-7}
			&   & $n = 50$ & \multicolumn{4}{c|}{0.042} \\ \cline{3-7}
			&   & $n = 100$ & \multicolumn{4}{c|}{0.056} \\
			\hline \hline 
			
			% Case 07: g(x) = x, B(a) = a^2
			\multirow{6}{*}{7}   &  \multirow{3}{*}{Estimated powers}   & $n = 10$ & 0.76  & 0.934     & 0.696      & 0.968 \\ 
			\cline{3-7}
			&    & $n = 20$ & 0.932  & 0.992    & 0.936      & 0.998  \\ \cline{3-7}
			&    & $n = 50$ & 1 & 1        & 1      & 1     \\
			\cline{2-7}
			&  \multirow{3}{*}{Estimated levels}  & $n = 10$ & \multicolumn{4}{c|}{0.03} \\\cline{3-7}
			&   & $n = 20$ & \multicolumn{4}{c|}{0.042} \\\cline{3-7}
			&    & $n = 50$ & \multicolumn{4}{c|}{0.036} \\
			\hline
		\end{tabular}
	\end{center}
	\caption{Estimated powers and estimated levels w.r.t different sample sizes for case 1 to 7.}
	\label{tab:power_level_p1}
\end{table}

\clearpage

Figures \ref{fig:Na_Nx_p1} and \ref{fig:Na_Nx_p2} represent the distributions of $N_\delta(a,x)$ and $N_j(a,x)$ ($j = 1, \ldots, 4$), corresponding  to the division kernels $h= \delta_{1/2}$ and $h_j$ ($j = 1, \ldots, 4$), along the $a$ axis and $x$ axis, respectively.
\medskip 

Several remarks are in order.
\begin{enumerate}
\item Across all cases (from Case 1 to Case 7), the estimated significance levels vary in the range $[0.03, 0.072]$, centering around $5\%$. This indicates that the test is well-calibrated under the null hypothesis $H_0$. Although Case 3 reaches a slightly higher level (0.072), this is still acceptable given the small sample size ($n = 10$). Furthermore, we observe that the power consistently increases with the sample size across all cases, which reflects the consistency of the testing procedure.
\item For Cases 1 and 2, the test achieves high power (approaching 1) only for relatively large sample sizes $(n \ge 200)$. In contrast, for Cases 3 to 7, the power quickly converges to 1 even at smaller sample sizes ($n = 20$ or $50$). This difference can be explained by examining the shape of the alternative density functions $N_j(a, x)$, as illustrated in Figures \ref{fig:Na_Nx_p1} and~\ref{fig:Na_Nx_p2}. In Cases 1 and 2, the densities under the alternative hypotheses—corresponding to different division kernels (Beta(2,2), Uniform, Truncated Normal, and Gaussian Mixture)—tend to resemble the null density $N_\delta(a,x)$, making them harder to distinguish. As a result, the test requires relatively large sample sizes (e.g., $n = 500$ or $1000$) to achieve high power. In contrast, in Cases 3 to 7, the alternative densities are more distinct from the null density, and thus the test attains high power with much smaller samples, such as $n = 50$. This also explains why the sample sizes chosen for the experiments vary across different cases. For instance, in cases 1 and 2, a large sample size (e.g.,  $n = 500$ or $n = 1000$) is required to clearly observe the asymptotic behavior of the test power. In contrast, for other cases, the power reaches nearly 1 with just $n=50$, so there is no need to conduct experiments with larger sample sizes.
\item In all cases, the test performs best—i.e., yields the highest power—when the division kernel follows a {Gaussian Mixture} or {Uniform} distribution. Additionally, the specific forms of the functions $g(x)$ and $B(a)$ also influence test performance. Obviously, settings involving $g(x) = \sqrt{x}$ or $x$, combined with $B(a) = a^2$, tend to produce more complex alternative densities $N_j(a,x)$, which are more easily distinguishable from the null density. This is particularly observable in Cases 3, 4, 5, and 7.
\end{enumerate}

\smallskip 
\begin{figure}[htbp]
	\centering
	% Case 1
	\begin{subfigure}{0.45\textwidth}
		\includegraphics[width=\linewidth]{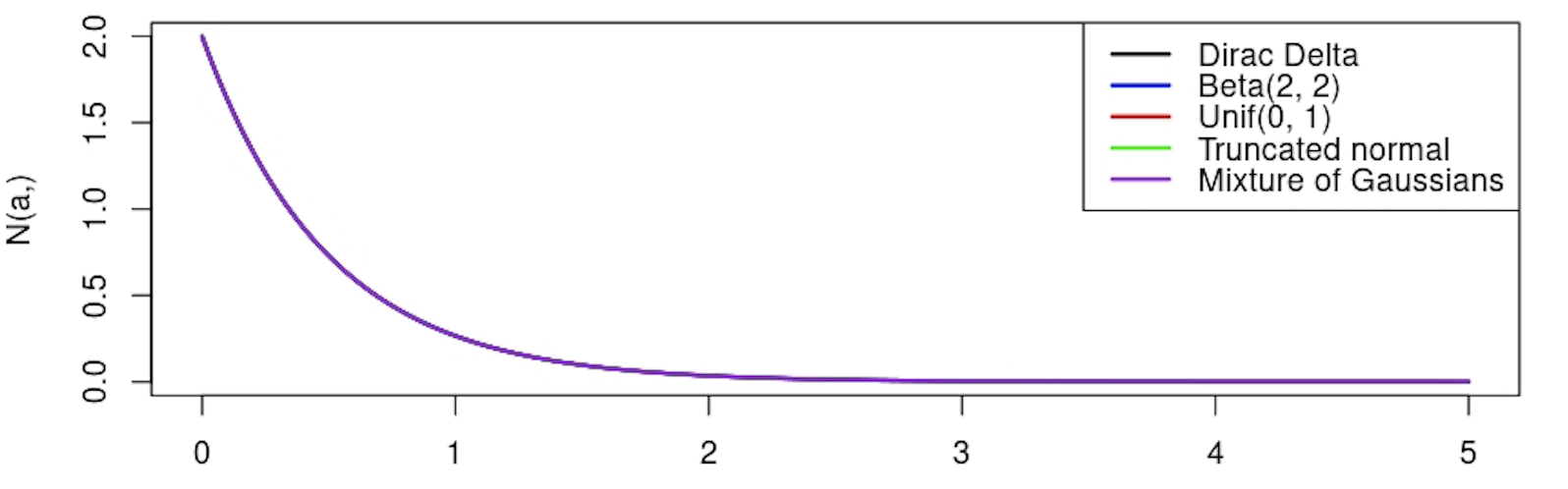}
		\caption{Case 1: $N(a,)$ w.r.t $g(x) = 1$, $B(a) = 1$.}
	\end{subfigure}
	\hfill
	\begin{subfigure}{0.45\textwidth}
		\includegraphics[width=\linewidth]{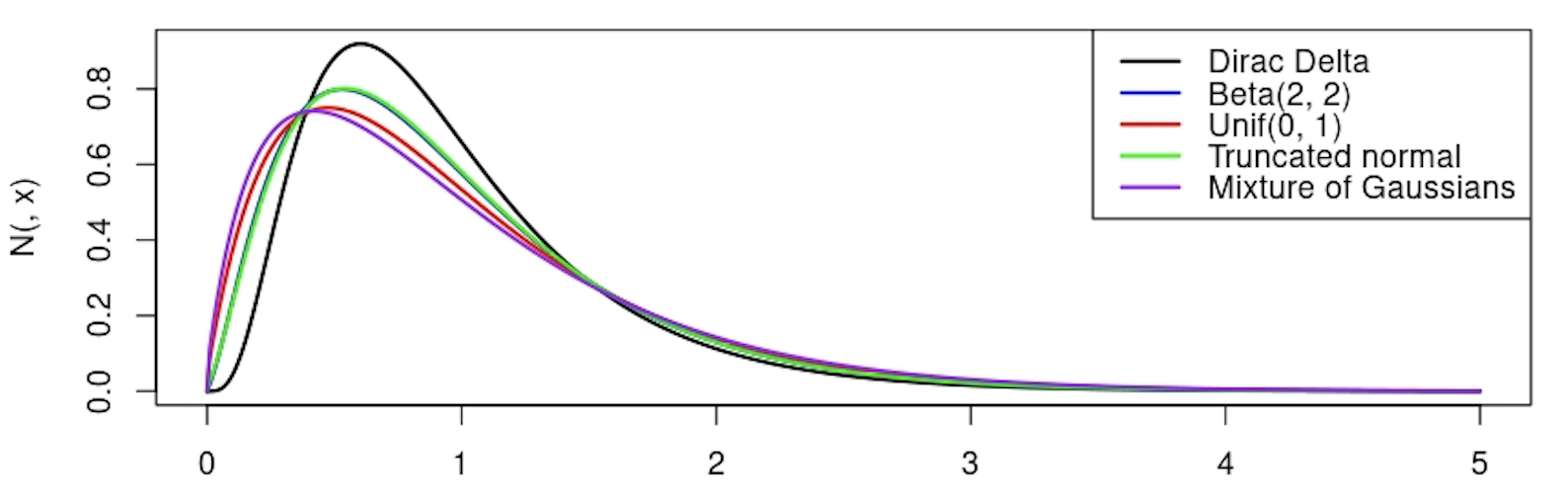}
		\caption{Case 1: $N(,x)$ w.r.t $g(x) = 1$, $B(a) = 1$.}
	\end{subfigure}
	\vspace{0.5cm}
	% Case 2
	\begin{subfigure}{0.45\textwidth}
		\includegraphics[width=\linewidth]{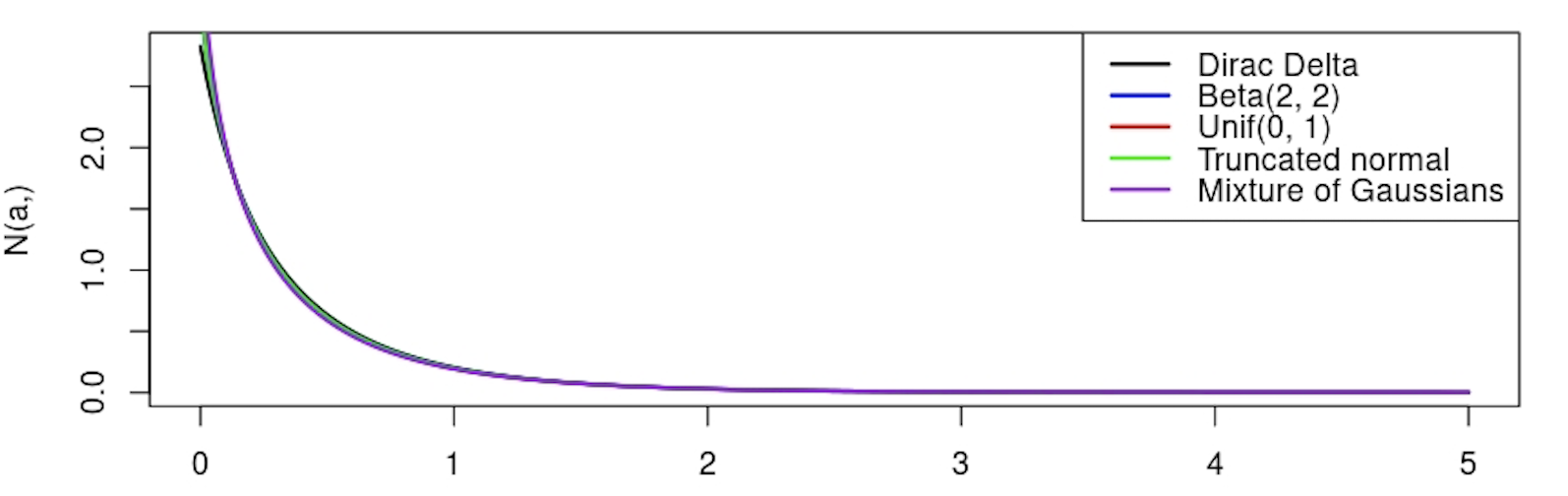}
		\caption{Case 2: $N(a,)$ w.r.t $g(x) = \sqrt{x}$, $B(a) = 1$.}
	\end{subfigure}
	\hfill
	\begin{subfigure}{0.45\textwidth}
		\includegraphics[width=\linewidth]{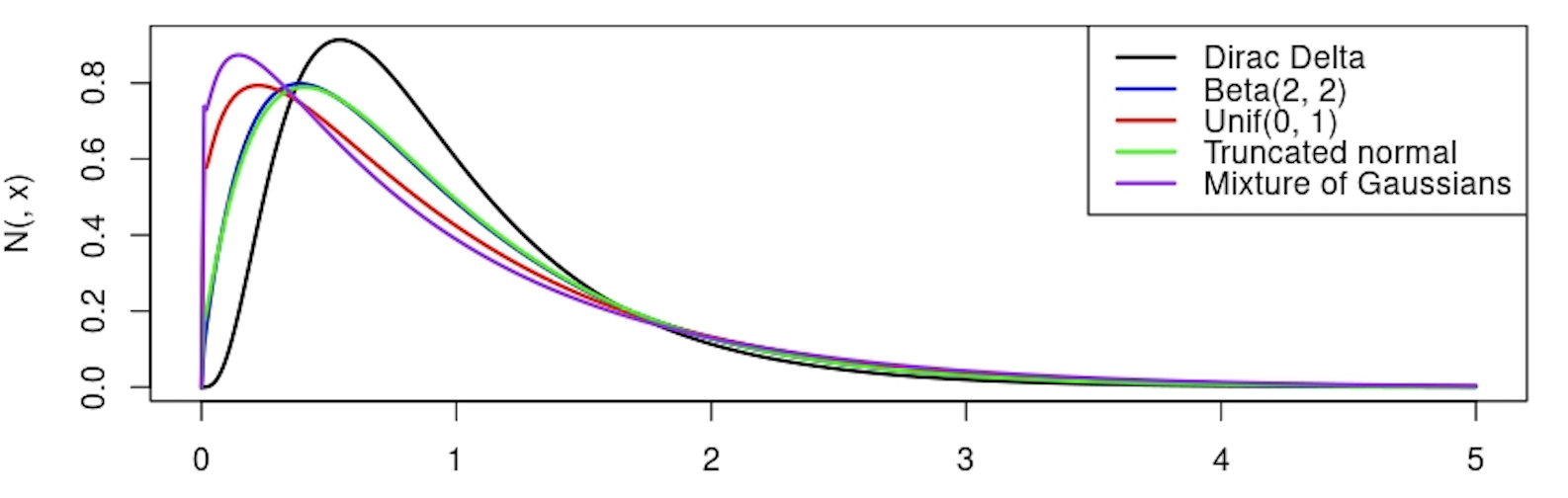}
		\caption{Case 2: $N(,x)$ w.r.t $g(x) = \sqrt{x}$, $B(a) = 1$.}
	\end{subfigure}
	\caption{Representations of $N_\delta$ and $N_j$ along each axis w.r.t the four division kernels—$\mathrm{Beta}(2,2)$, Uniform, Truncated Gaussian, and Gaussian Mixture—across Cases 1 to 2.}\label{fig:Na_Nx_p1}
\end{figure}

\begin{figure}[ht]
	\centering
%	% Case 1
%	\begin{subfigure}{0.45\textwidth}
%		\includegraphics[width=\linewidth]{Case1_Na.png}
%		\caption{Case 1: $N(a,)$ w.r.t $g(x) = 1$, $B(a) = 1$.}
%	\end{subfigure}
%	\hfill
%	\begin{subfigure}{0.45\textwidth}
%		\includegraphics[width=\linewidth]{Case1_Nx.png}
%		\caption{Case 1: $N(,x)$ w.r.t $g(x) = 1$, $B(a) = 1$.}
%	\end{subfigure}
%	\vspace{0.5cm}
%	% Case 2
%	\begin{subfigure}{0.45\textwidth}
%		\includegraphics[width=\linewidth]{Case4_Na.png}
%		\caption{Case 2: $N(a,)$ w.r.t $g(x) = \sqrt{x}$, $B(a) = 1$.}
%	\end{subfigure}
%	\hfill
%	\begin{subfigure}{0.45\textwidth}
%		\includegraphics[width=\linewidth]{Case4_Nx.png}
%		\caption{Case 2: $N(,x)$ w.r.t $g(x) = \sqrt{x}$, $B(a) = 1$.}
%	\end{subfigure}
%	\vspace{0.5cm}
	% Case 3
	\begin{subfigure}{0.45\textwidth}
		\includegraphics[width=\linewidth]{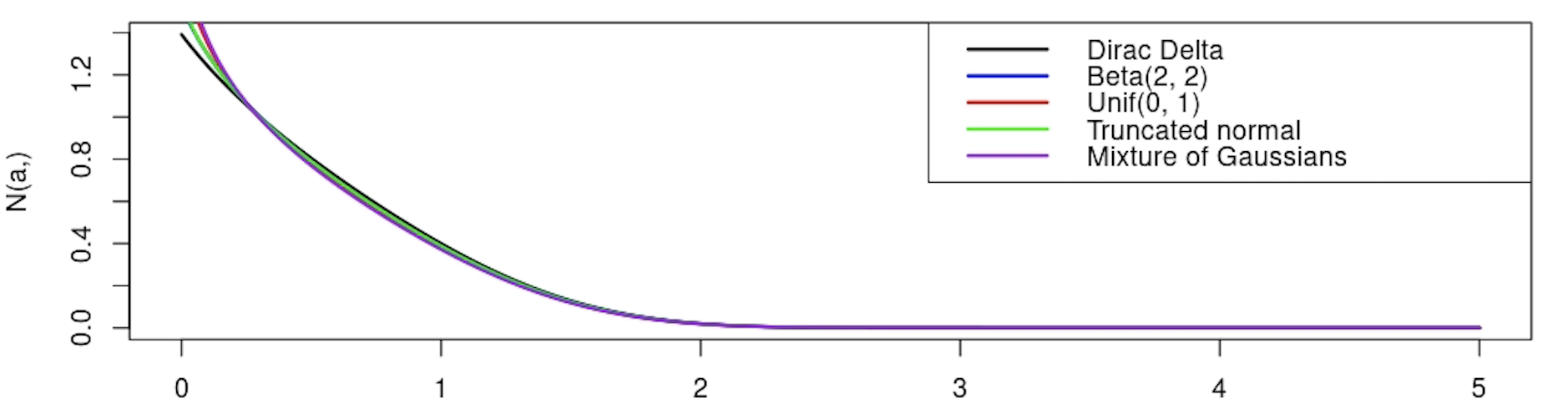}
		\caption{Case 3: $N(a,)$ w.r.t $g(x) = \sqrt{x}$, $B(a) = a^2$.}
	\end{subfigure}
	\hfill
	\begin{subfigure}{0.45\textwidth}
		\includegraphics[width=\linewidth]{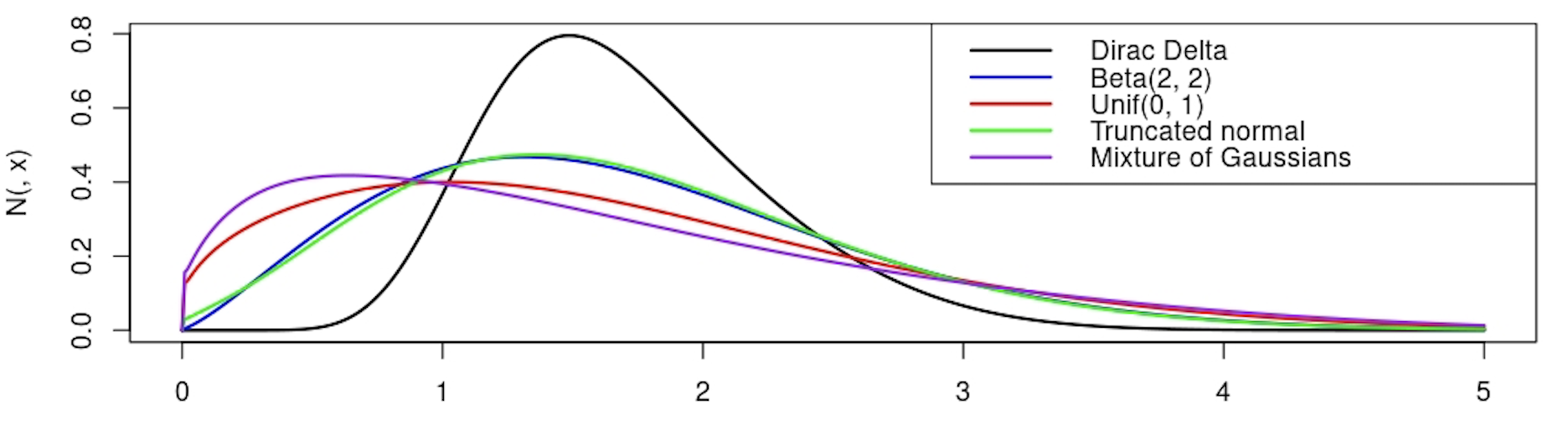}
		\caption{Case 3: $N(, x)$ w.r.t $g(x) = \sqrt{x}$, $B(a) = a^2$.}
	\end{subfigure}
	% Case 4
	\begin{subfigure}{0.45\textwidth}
		\includegraphics[width=\linewidth]{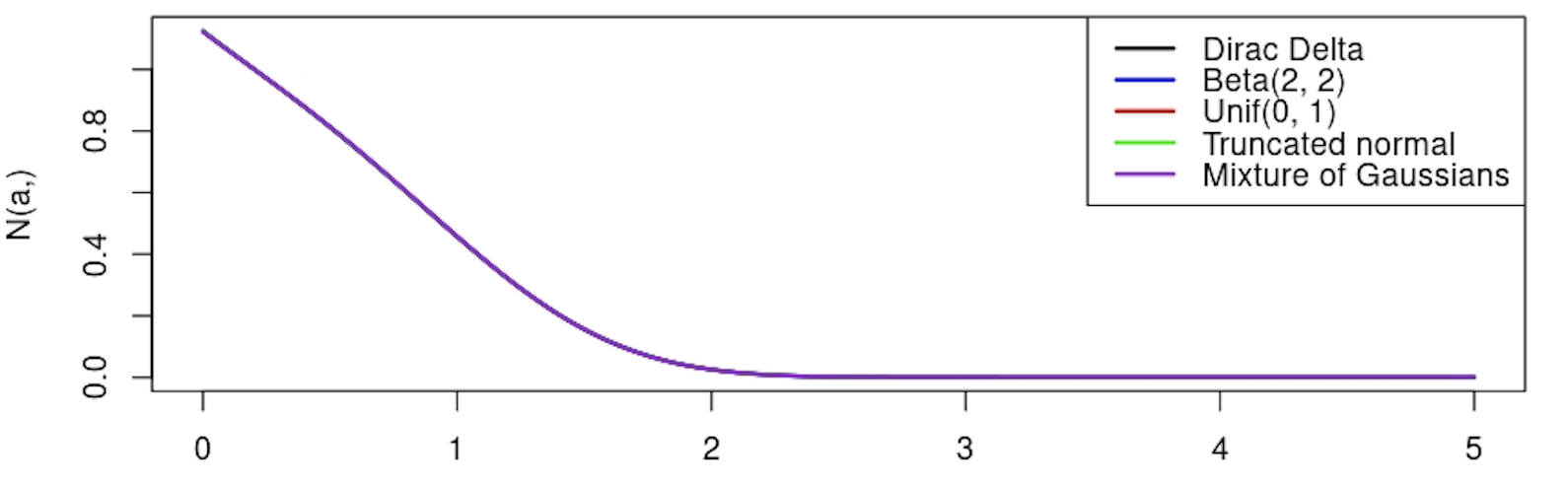}
		\caption{Case 4: $N(a,)$ w.r.t $g(x) = 1$, $B(a) = a^2$.}
	\end{subfigure}
	\hfill
	\begin{subfigure}{0.45\textwidth}
		\includegraphics[width=\linewidth]{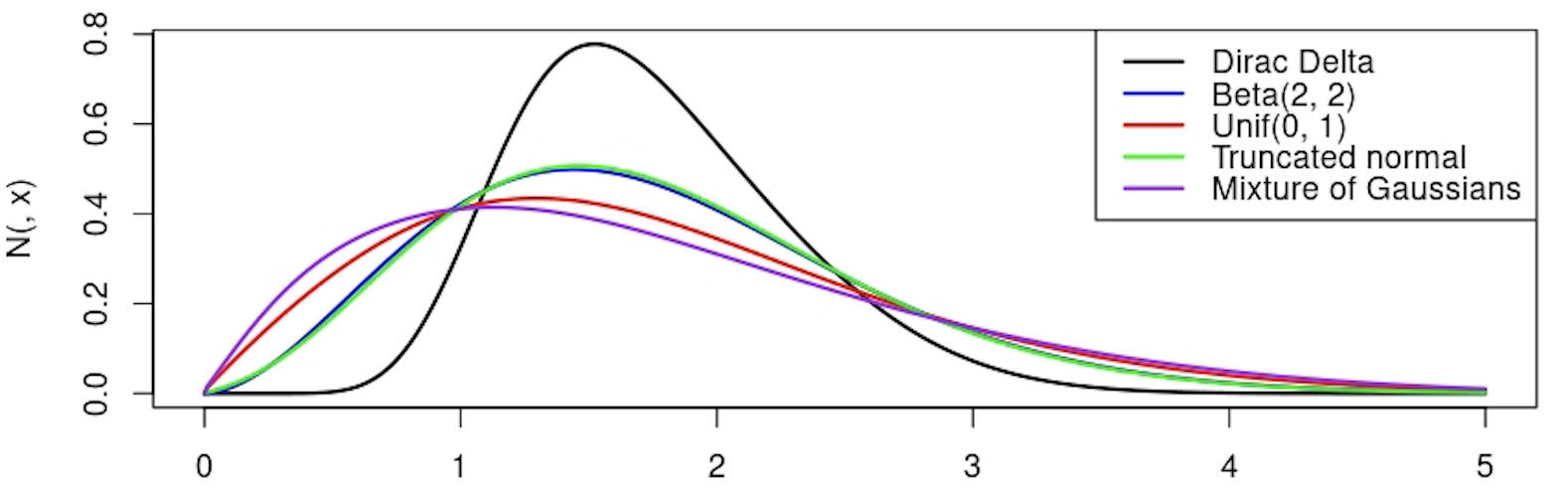}
		\caption{Case 4: $N(, x)$ w.r.t $g(x) = 1$, $B(a) = a^2$.}
	\end{subfigure}
	% Case 5
	\begin{subfigure}{0.45\textwidth}
		\includegraphics[width=\linewidth]{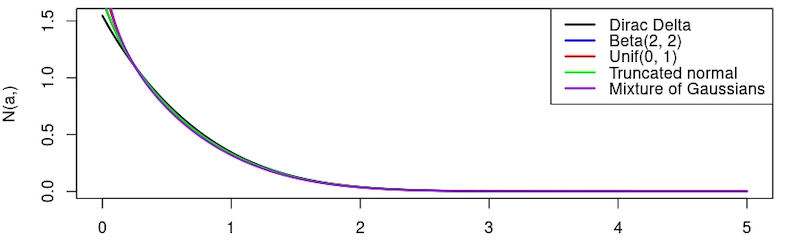}
		\caption{Case 5: $N(a,)$ w.r.t $g(x) = \sqrt{x}$, $B(a) = a$.}
	\end{subfigure}
	\hfill
	\begin{subfigure}{0.45\textwidth}
		\includegraphics[width=\linewidth]{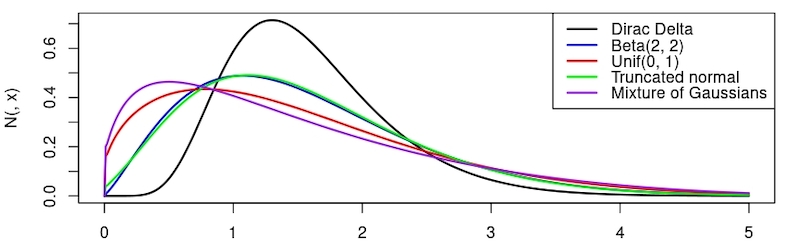}
		\caption{Case 5: $N(, x)$ w.r.t $g(x) = \sqrt{x}$, $B(a) = a$.}
	\end{subfigure}

	% Case 6
	\begin{subfigure}{0.45\textwidth}
		\includegraphics[width=\linewidth]{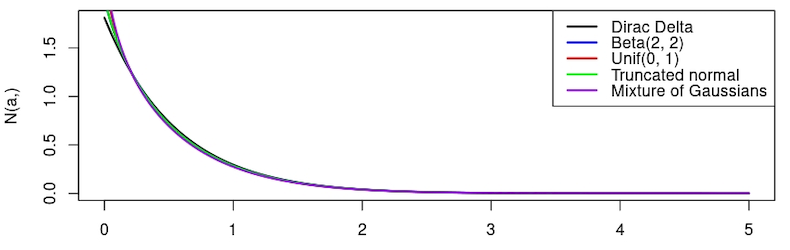}
		\caption{Case 6: $N(a,)$ w.r.t $g(x) = \sqrt{x}$, $B(a) = \sqrt{a}$.}
	\end{subfigure}
	\hfill
	\begin{subfigure}{0.45\textwidth}
		\includegraphics[width=\linewidth]{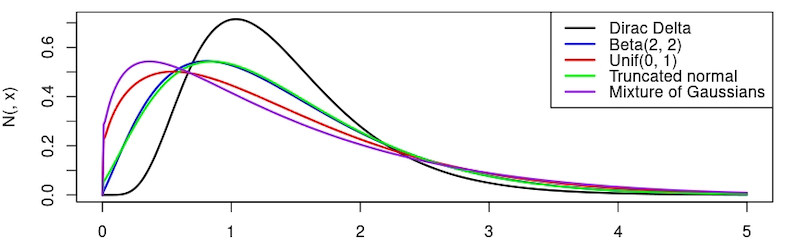}
		\caption{Case 6: $N(a,)$ w.r.t $g(x) = \sqrt{x}$, $B(a) = \sqrt{a}$.}
	\end{subfigure}
	% Case 7
	\begin{subfigure}{0.45\textwidth}
		\includegraphics[width=\linewidth]{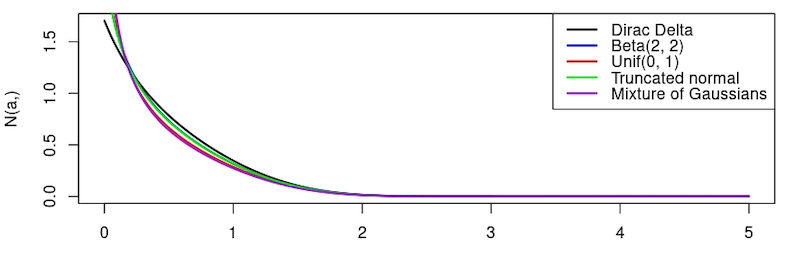}
		\caption{Case 7: $N(a,)$ w.r.t $g(x) = x$, $B(a) = a^2$.}
	\end{subfigure}
	\hfill
	\begin{subfigure}{0.45\textwidth}
		\includegraphics[width=\linewidth]{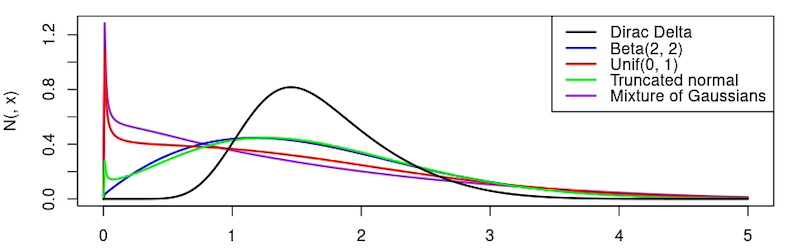}
		\caption{Case 7: $N(a,)$ w.r.t $g(x) = x$, $B(a) = a^2$.}
	\end{subfigure}
	\caption{Representations of $N_\delta$ and $N_j$ along each axis w.r.t the four division kernels—$\mathrm{Beta}(2,2)$, Uniform, Truncated Gaussian, and Gaussian Mixture—across Cases 3 to 7.}\label{fig:Na_Nx_p2}
\end{figure}

\newpage
\section{Testing with biological data}\label{sec:Ecoli}

\subsection{Data description}

\tvc{We are interested here is models for the division of \textit{Escherichia coli}. Of course, \textit{E. coli} have been much studied in recent years, see e.g. \cite{LydiaRobert14}, and experiments and model inference have shown in favor of the alternative hypothesis $H_1$. Our purpose here is to re-explore these questions with a rigorous statistical testing procedure. 
}

\tvc{We uses the density $N(x)$ that was reconstructed by \cite{Bourgeron14} and provided by these authors. This density is based on a real dataset of \textit{Escherichia coli} cells published in \cite{Stewart05} (see Figure \ref{fig:Nbio}). As the data by \cite{Stewart05} is only available in pooled form in the supplementary material of their paper, we generated $n = 10,000$ observations of sizes, $X_1, X_2, \ldots, X_n$, drawn from the density $N(x)$ provided in \cite{Bourgeron14}. 
}

\newpage 
\begin{figure}[htbp]
	\centering
	\includegraphics[scale = 0.75]{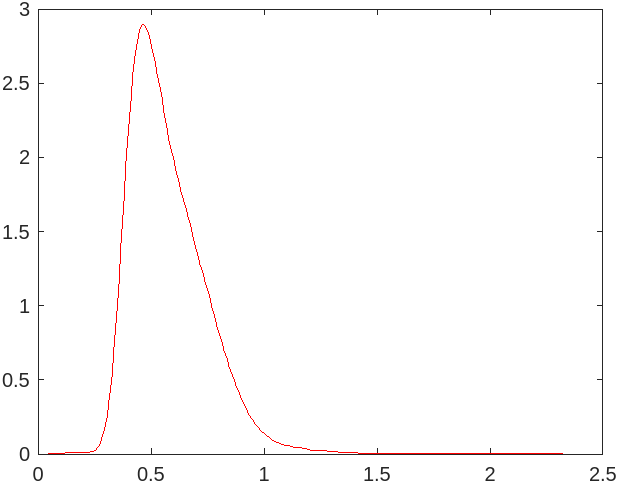}
	\caption{Density $N(x)$ of the real biological dataset.}
	\label{fig:Nbio}
\end{figure}

\tvc{From the data, we identify the best-fit model corresponding to the biological dataset. The models under consideration include:
\begin{itemize} 
	\item The mitosis model, which comprises two cases: one with a constant growth rate of $1$ and a division rate given by $B(x) = R$, and another with a general growth rate $g(x)$ and a division rate $B(x)$. 
	\item The adder model, in which we fit the marginal density $N(x)$ to the biological dataset. Indeed, the joint distribution of size and size-increment is not available, so we compare the empirical distribution with the distribution of sizes is obtained by marginalizing the distributions $N(a,x)$ from the model. 
\end{itemize}
Although \cite{LydiaRobert14} obtained on experimentations that the growth rate is linear and the division rate is not constant, we considered all the cases above to revisit the data and also check the sensitivity of our test to the model. 
}

\subsection{Fitting the mitosis model}

%We consider the following cases to find the best-fit model corresponding to the "real" dataset:

\subsubsection{Case of Constant Growth Rate $ g(x) = 1 $ and Division Rate $ B(a) = R $}  
\label{sec:mitosis-model-constant-case}

Under this configuration, the stationary density $ N(x) $ satisfies the simplified model defined in Equation~\eqref{eq:model}:  
%\[
%D(x) + 2R\,N(x) = 2R \int_0^\infty N(y)\, %h\left( \frac{x}{y} \right) \frac{dy}{y}, %\quad x \geq 0.
%\]  
\tani{
\begin{equation*}
 D(x) + 2R\  N(x) = 2R \int_0^1 N\left(\frac{x}{\theta}\right) \frac{ \kappa (d\theta)}{\theta}, \quad x \ge 0,
\end{equation*}
}
Under the null hypothesis $ H_0 $ (i.e., when $ N = N_0 $), this equation reduces to:  
\[
D(x) + 2R\,N(x) = 4R\,N(2x), \quad x \geq 0,
\]  
which admits an explicit stationary solution (Equation~\eqref{eq:N-explicit}):  
\[
N_0(x) = \bar{N} \sum_{n=0}^{\infty} (-1)^n \alpha_n e^{-2^{n+1} R x}.
\]  

Given a dataset $ X_1, \ldots, X_n$ sampled from $ N(x) $, we estimate the parameters $ R $ and $ \bar{N} $ in the expression of $ N_0(x) $ using the following approaches:  
\begin{enumerate}
	\item \textbf{Method of Moments}: Determine $ R $ and $ \bar{N} $ by solving the system:  
	\begin{equation}\label{eq:system-moment-N}
		\begin{cases}
			\lVert N_0 \rVert_{L^1} = \lVert N \rVert_{L^1}, \\
			\displaystyle\int_0^{\infty} x N_0(x)\,dx = \int_0^{\infty} x N(x)\,dx,
		\end{cases}
	\end{equation}
	where  
	\[
	\lVert N_0 \rVert_{L^1} = \frac{\bar{N}}{R} \sum_{n=0}^{\infty} (-1)^n \frac{\alpha_n}{2^{n+1}}, \quad \text{and} \quad \int_0^{\infty} x N_0(x)\,dx = \frac{\bar{N}}{R^2} \sum_{n=0}^{\infty} (-1)^n \alpha_n \frac{1}{2^{2n+2}}.
	\]
	The moment estimates are derived as:  
	\[
	\begin{cases}
		\hat{R}  = \dfrac{\sum_{n=0}^{\infty} (-1)^n \frac{\alpha_n}{2^{2n+2}}}{\sum_{n=0}^{\infty} (-1)^n \frac{\alpha_n}{2^{n+1}}} \dfrac{\lVert \hat{N} \rVert_1}{\int_0^\infty x \hat{N}(x)\,dx}, \\
		{\bar{N}= }\dfrac{\hat{R} \lVert \hat{N} \rVert_1}{\sum_{n=0}^{\infty} (-1)^n \frac{\alpha_n}{2^{n+1}}},
	\end{cases}
	\]
	where $ \hat{N} $ is the standard kernel density estimator constructed from $ X_1, \ldots, X_n$.
	
	\item \textbf{Least Squares Minimization}: Estimate $ R $ by minimizing the $ L^2 $-distance:  
	\[
	\hat{R} = \underset{R \in \mathcal{R}}{\arg\min} \lVert N_0 - N \rVert_2^2,
	\]
	where $ \mathcal{R} = \{0.01, 0.02, \ldots, 19.99, 20\} $. The renormalization constant $ \bar{N} $ is then computed as:  
	\[
	{\bar{N}= }  \left(\int_{0}^{\infty} \sum_{n=0}^{\infty} (-1)^n \alpha_n e^{-2^{n+1}\hat{R} x}\,dx\right)^{-1}.
	\]
\end{enumerate}

We first estimate $ R $ and $ \bar{N} $ using the biological dataset via both methods. A statistical test is then performed at a significance level $ \alpha = 0.05 $. The results are summarized in Table~\ref{tab:result_test_constant_case}.

\begin{table}[htbp]
	\centering
	\setlength{\tabcolsep}{12pt} 
	\renewcommand{\arraystretch}{1.5} 
	\begin{tabular}{|c|c|c|}
		\toprule
		\textbf{Parameter} &  \textbf{Moment Method} & \textbf{Least Squares} \\ 
		\midrule
		$ R $             & 1.732    & 1.4234   \\ 
		$ \bar{N} $       & 11.993   & 9.9032   \\
		$ u_\alpha $      & 0.05     & 0.0255   \\
		$ T_\alpha $      & 0.399    & 0.398    \\
		\bottomrule
	\end{tabular}
	\caption{Estimated parameters $\bar{N}$ and $R$, together with test statistics for assessing the fit of the mitosis model under constant growth and division rates.}
	\label{tab:result_test_constant_case}
\end{table}

Figure~\ref{fig:test-mitosis-model-constant-case} illustrates the densities obtained by parameters estimated from moment estimate and least square estimate with the density of real biological dataset. 

It can be observed that the parameter estimates for $R$ and $\bar{N}$ obtained via both the method of moments and the least squares minimization are relatively consistent in magnitude. Moreover, the hypothesis testing results derived from both approaches are in agreement, leading to the rejection of the null hypothesis $H_0$. Figure~\ref{fig:test-mitosis-model-constant-case} illustrates a clear discrepancy between the estimated densities (dashed and dot-dash lines) and the empirical density $N(x)$, indicating a poor fit. These observations suggest that the assumption of a constant division rate in the mitosis model fails to adequately capture the characteristics of the observed biological data.

%\begin{figure}[t]
%	\centering
%	\includegraphics[scale=0.45]{Case01_GridSearch_dataplot}
%	\caption{Densities constructed from moment estimates (blue dashed line) and least square estimate (blue dot-dash line) with the density $N$ in red.}
%	\label{fig:test-mitosis-model-constant-case}
%\end{figure}

\subsubsection{Case of general growth rate $g(x)$ and division rate $B(x)$}
\label{sec:mitosis-model-general-case}

Under this case and the null hypothesis, the stationary density $N = N_{0,\delta}$ satisfies the following equation

\begin{equation}\label{eq:Nbio_general_model}
	\partial_{x}(g(x)N_{0, \delta}(x)) + (\lambda + B(x))N_{0, \delta}(x) = 4B(2x)N_{0, \delta}(2x), \quad x \ge 0,
\end{equation}

Equation \eqref{eq:Nbio_general_model} is the corresponding eigenvalue problem of the PDE \eqref{EDP1} when $\kappa (d\theta) = \delta_{1/2}(d\theta)$, see for instance, \cite{DHRR}, \cite{Bourgeron14}. The parameter $\lambda$ here is the first eigenvalue of eigenvalue problem \eqref{eq:Nbio_general_model}. \\

Furthermore, we assume the following form of growth rate $g(x) = r x^\gamma$ and division rate $B(x) = x^\eta$.

%\newpage 
\begin{figure}[!ht]
	\centering
	\includegraphics[scale=0.45]{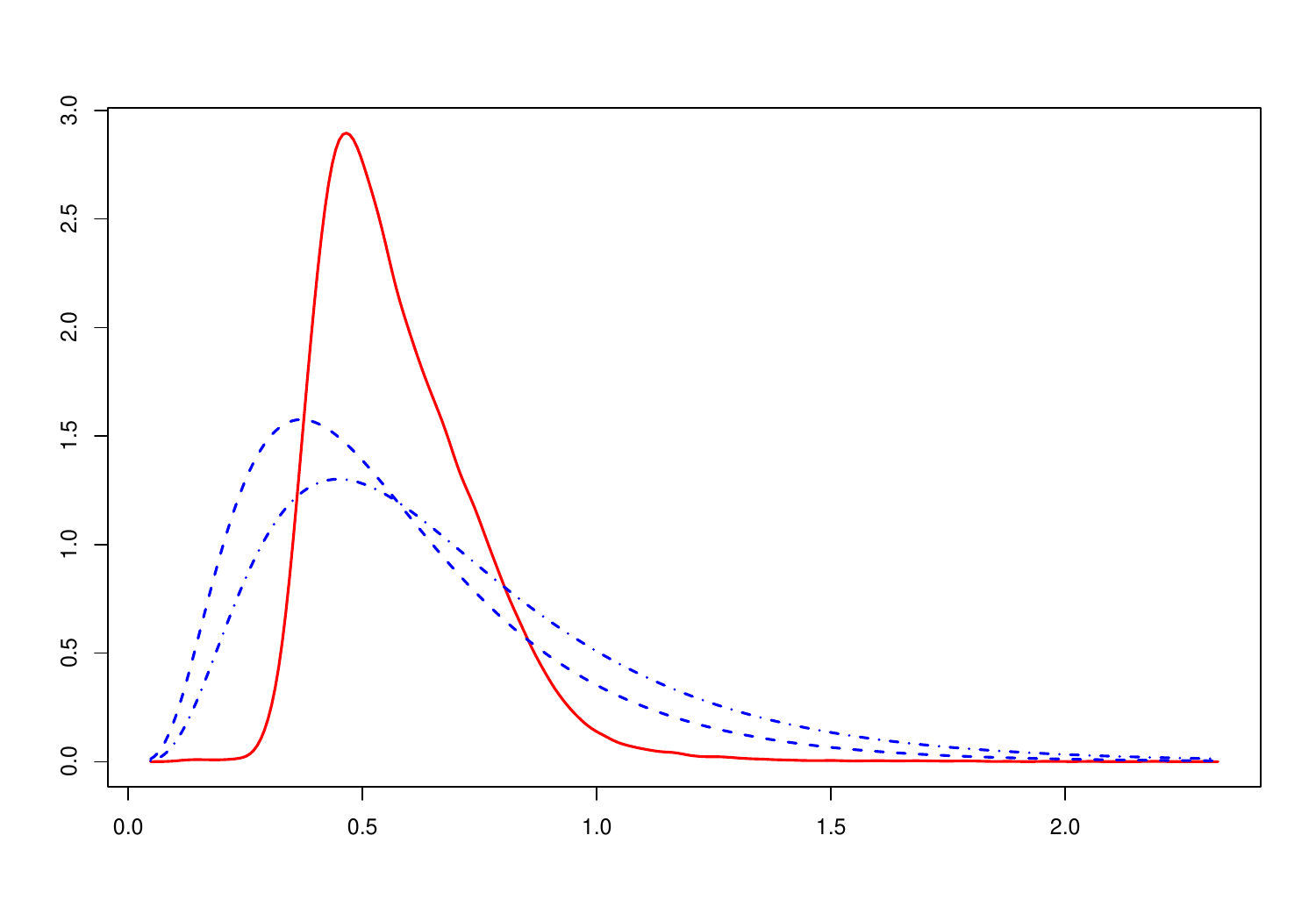}
	\caption{Densities constructed from moment estimates (blue dashed line) and least squares estimates (blue dot-dash line), together with the true density $N$ (red), for the mitosis model under constant growth and division rates.}
	\label{fig:test-mitosis-model-constant-case}
\end{figure}

To select appropriate form of $g(x)$ and $B(x)$ of the model fitted with the biological dataset, we conduct a grid search for each parameter: $ r, \gamma$, and $\eta$. We define the fixed grid as follows: \\

\begin{table}[h]
	\centering
	\begin{tabular}{|c|l|c|}
		\toprule
		\textbf{Parameter} & \hspace{1.3in}\textbf{Search Space} & \textbf{Number of Points} \\
		\midrule
		$r$ & $\{0.001, 0.01, 0.05, 0.01, 0.1, 0.5, 1, 5, 10, 20\}$ & $10$\\
		$\gamma$ & $\{0.5, 0.775, 1.05, \ldots, 5.45, 5.725, 6.0\}$ & $21$\\
		$\eta$ & $\{0.5, 0.825, 1.15, \ldots, 6.35, 6.675, 7.0, 7.5, 8.0, \ldots, 9.5, 10.0\}$ & $31$\\
		\bottomrule
	\end{tabular}
	\caption{Parameter search ranges for fitting the mitosis model with general growth rate $g(x) = r x^{\gamma}$ and division rate $B(x) = x^{\eta}$.}
	\label{tab:grid-search}
\end{table}

For each combination of $r, \gamma$, and $\eta$, we solve the PDE \eqref{EDP1} when $\kappa(d\theta) = \delta_{1/2}(\theta)$:
\[
\begin{cases}
	\partial_t n(t,x)+\partial_x \big( g(x) n(t,x)\big) +  B(x)n(t,x)=  4B(2x)n(t,2x), \\
	n(0,x) = n(x), n(t, 0) = 0. 
\end{cases}
\]
To get the stationary density $N_{0,\delta}$, then we compute the $L_2$-distance of $N_{0,\delta}$ with $N$, the density of real dataset. The parameters that give the best-fit to the biological dataset are those that minimize the $L^2$-norm
\[
(\hat r, \hat\gamma, \hat \eta) = \underset{r, \gamma, \eta}{\argmin} \, \|N - N_{0, \delta} \|_2^2. 
\]

We then obtain $\hat r = 0.05$, $\hat\gamma = 1.325, \hat \eta = 9.0$. Below is the plot that compares the density $N$ with the $N_{0, \delta}$ reconstructed from $\hat r, \hat \gamma$, and $\hat \eta$:

\begin{figure}[htbp]
	\centering
	\includegraphics[scale = 0.5]{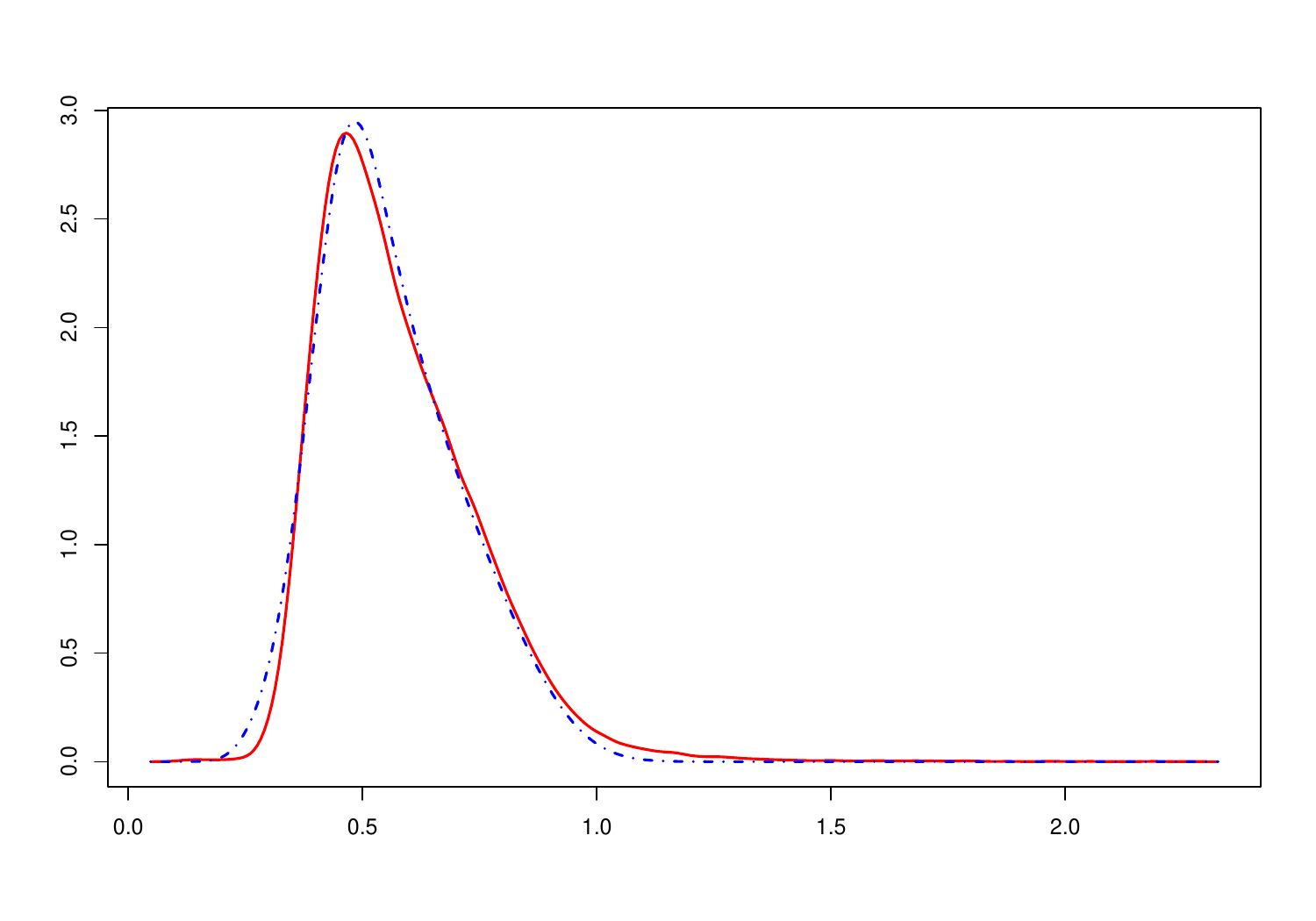}
    \caption{Comparison between the empirical density $N(x)$ (red line) derived from biological data and the fitted density $N_{0,\delta}$ (blue dotted line) obtained from the mitosis model with general growth rate $g(x) = r x^{\gamma}$ and division rate $B(x) = x^{\eta}$, using the parameter estimates $\hat{r} = 0.05$, $\hat{\gamma} = 1.325$, and $\hat{\eta} = 9.0$.}
	\label{fig:test-mitosis-model-general-case}
\end{figure}

Finally, we perform the statistical test with significant level $\alpha = 0.05$ to obtain $u_{\alpha} = 0.0355$ and $T_{\alpha} \approx -0.0381 \le 0$. Thus, we do not reject $H_0$ with significant level $\alpha = 0.05$.

\subsection{Fitting the adder model}

In this section, we investigate the fit of the adder model (see Section~\ref{sec:adder-model}) to the provided biological dataset. Since the available data only contains measurements of the density function $N(x)$, we focus on comparing and testing $ N(x) $ against the marginal density $ N_{\delta}(x) = \int N_{\delta}(a,x)\,da $.  

Following a similar approach to that used in Section~\ref{sec:mitosis-model-general-case}, we assume the growth rate and division rate follow the forms $g(x) = r x^\gamma $ and $B(a) = a^\eta$ where $r, \gamma, \eta$ are model parameters to be estimated. To determine the optimal values of these parameters, we conduct a grid search over a predefined set of candidate values. For each combination $ (r, \gamma, \eta) $, we numerically solve for the joint density $N_\delta(a,x)$ (see Section~\ref{sec:simu-adder-model}) and compute its marginal over size-increment $a$. We then evaluate the $L^2$-distance between the model-derived marginal density $N_\delta(x)$ and the empirical density $N(x)$. The parameter set that minimizes this distance is selected as the best fit. The search space for the parameters is shown in Table~\ref{tab:grid-search-2d}.

\begin{table}[htbp]
	\centering
	\begin{tabular}{|c|l|c|}
		\toprule
		\textbf{Parameter} & \hspace{1in}\textbf{Search Space} & \textbf{Number of Points} \\
		\midrule
		$ r $     & $ \{0.01, 0.1, 0.5, 1.0, 2.0, 3.0, 4.0, 5.0, 6.0, 7.0\} $ & 10 \\
		$ \gamma $ & $ \{5.0, 5.5, 6.0, 6.5, \ldots, 9.0, 9.5, 10.0\} $ & 11 \\
		$ \eta $  & $ \{0.1, 0.345, 0.59, \ldots, 5.0\} $ & 21 \\
		\bottomrule
	\end{tabular}
    \caption{Grid search parameter ranges for the Adder model with growth rate $g(x) = r x^{\gamma}$ and division rate $B(a) = a^{\eta}$.}

	\label{tab:grid-search-2d}
\end{table}

The optimal parameters values found through this procedure are $\hat r = 5.0$, $\hat \gamma = 9.0$, and $\hat \eta = 3.775$. Figure~\ref{fig:adder-N} illustrates the comparison between the empirical density $N(x)$ and the reconstructed marginal density $N_\delta(x)$ obtained using these estimated parameters. 

Finally, we perform the statistical test $H_0: N = N_\delta$ vs. $H_1: N \neq N_\delta$ with $\alpha = 0.05$. The test yields $u_\alpha - 0.0355$ and $T_\alpha = - 00261$, this implies that we do not reject the null hypothesis at the $\alpha  = 0.05$ level, supporting the validity of the adder model for this dataset.  

\newpage

\begin{figure}[t]
	\centering
	\includegraphics[scale=0.5]{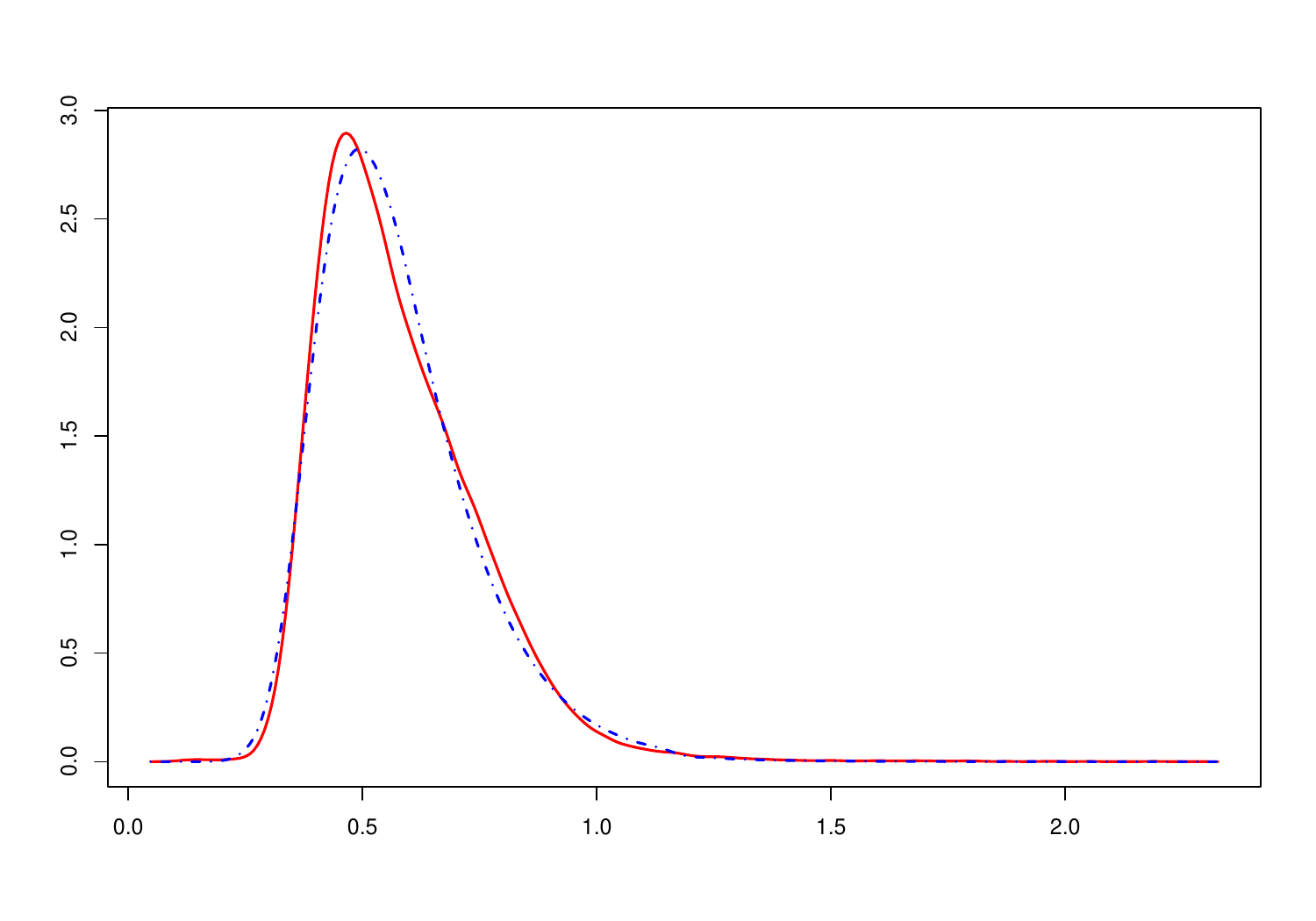}
	%\caption{Comparison of empirical density $ N(x) $  (in blue dotted lines) and fitted marginal density $ N_\delta(x) $ (in red) from the estimated parameters.}
    \caption{Comparison between the empirical density $N(x)$ (red line) and the fitted marginal density $N_{\delta}(x)$ (blue dotted line) obtained from the adder model with estimated parameters.}

	\label{fig:adder-N}
\end{figure}

\subsection{Conclusion}

\tvc{We have provided in this paper a statistical test adapted from 
the non-parametric goodness-of-fit approach of \cite{fromontlaurent2006} for the probability kernel appearing in the dynamics of dividing cells. In the framework considered here, observations consist of sizes of individuals drawn independently in $N(x,a)$, so the divisions are not directly observed. The inverse problem can be handled thanks to the injectivity between the division kernel $\kappa(d\theta)$ that is the object of the original test and the stationary age-size profile $N(x,a)$ of the PDE describing the population's evolution. Of course, if finer observation is available, direct computation of the distribution $\kappa(d\theta)$ can be made, following the work of \cite{Hoang2017}.} \\

\tvc{Data related with \textit{E. coli} are treated as a case study. We recover some results by \cite{LydiaRobert14}, namely that models with constant division rates fail to capture the characteristics of the data, while the mitosis model with growth rate $g(x)=0.05\, x^{1.325}$ and $B(x)=x^{9.0}$ one the one hand, and the adder model with $g(x)=5 \, x^{9.0}$ and $B(a)=a^{3.775}$ on the other hand both provide good fits of the first eigenfunction (see Figures \ref{fig:test-mitosis-model-general-case} and \ref{fig:adder-N}).
It is interesting to note that the null hypothesis (of equal division) is not significantly rejected in the chosen mitosis model, while it is for the adder model. Therefore an error in the model specification can lead to different conclusions, and it is not always easy to rule out one model for the other. 
Of course, \textit{E. coli} is considered here as an illustration of our testing procedure. We point out potential pitfalls, emphasizing the need of rigorous statistical procedures for treating the data. We have no intention to contradict previous results of biological studies of \textit{E. coli}, that are founded on richer and complementary datasets. \\
}

\bigskip 

\noindent \textbf{Acknowledgements: } Van Hà Hoang was invited at Universit\'e Gustave Eiffel in Autumn 2023 thanks to the `Campagne Actions Internationales 2023'. \tvc{We thank Marie Doumic and the anonymous Reviewer for interesting comments and questions that help us a lot to improve this work. We also thank Marie Doumic and Thibault Bourgeron for sharing datasets with us.} This work has been supported by Labex Bézout (ANR-10-LABX-58) and by the Chair ``Modélisation Mathématique et Biodiversité" of Veolia Environnement-Ecole Polytechnique-Museum National d’Histoire Naturelle-Fondation X. V.C.T. has been partially supported by the European Union (ERC, SINGER, 101054787). Views and opinions expressed are however those of the author(s) only and do not necessarily reflect those of the European Union or the European Research Council. Neither the European Union nor the granting authority can be held responsible for them.

\bibliographystyle{apalike}
%\bibliography{references}

\begin{thebibliography}{}

\bibitem[Anderson and Petersen, 1967]{ANDERSON1967353}
Anderson, E.~C. and Petersen, D.~F. (1967).
\newblock Cell growth and division: Ii. experimental studies of cell volume
  distributions in mammalian suspension cultures.
\newblock {\em Biophysical Journal}, 7(4):353--364.

\bibitem[Ballif et~al., 2024]{ballifclementyvinec}
Ballif, G., Cl\'ement, F., and Yvinec, R. (2024).
\newblock Nonlinear compartmental modeling to monitor ovarian follicle
  population dynamics on the whole lifespan.
\newblock {\em Journal of mathematical biology}, 89(9).

\bibitem[Bansaye and M\'el\'eard, 2015]{BansayeMeleard}
Bansaye, V. and M\'el\'eard, S. (2015).
\newblock {\em Some stochastic models for structured populations : scaling
  limits and long time behavior}.
\newblock arXiv:1506.04165.

\bibitem[Baranda et~al.,
  2025]{fernandezbarandabansayemounierugomeleardgiraudier}
Baranda, A.~F., Bansaye, V., Lauret, E., Mounier, M., Ugo, V., M\'el\'eard, S.,
  and Giraudier, S. (2025+).
\newblock Classical {J}{A}{K}2{V}617{F}+ myeloproliferative neoplasms emergence
  and development based on real life incidence and mathematical modeling.
\newblock arXiv:2406.06765.

\bibitem[Bell and Anderson, 1967]{BELL1967329}
Bell, G.~I. and Anderson, E.~C. (1967).
\newblock Cell growth and division: I. a mathematical model with applications
  to cell volume distributions in mammalian suspension cultures.
\newblock {\em Biophysical Journal}, 7(4):329--351.

\bibitem[Bernard et~al., 2019]{bernarddoumicgabriel2019}
Bernard, E., Doumic, M., and Gabriel, P. (2019).
\newblock Cyclic asymptotic behaviour of a population reproducing by fission
  into two equal parts.
\newblock 12(3):551--571.

\bibitem[Bissantz et~al., 2009]{Bissantz2009}
Bissantz, N., Claeskens, G., Holzmann, H., and Munk, A. (2009).
\newblock Testing for lack of fit in inverse regression-with applications to
  biophotonic imaging.
\newblock {\em J. R. Stat. Soc., Ser. B, Stat. Methodol.}, 71(1):25--48.

\bibitem[Bourgeron et~al., 2014]{Bourgeron14}
Bourgeron, T., Doumic, M., and Escobedo, M. (2014).
\newblock Estimating the division rate of the growth-fragmentation equaion with
  a self-similar kernel.
\newblock {\em Inverse Problems}, 30:28p.

\bibitem[Butucea, 2007]{Butucea2007}
Butucea, C. (2007).
\newblock Goodness-of-fit testing and quadratic functional estimation from
  indirect observations.
\newblock {\em Ann. Stat.}, 35(5):1907--1930.

\bibitem[Butucea et~al., 2009]{Butucea2009}
Butucea, C., Matias, C., and Pouet, C. (2009).
\newblock Adaptive goodness-of-fit testing from indirect observations.
\newblock {\em Ann. Inst. Henri Poincar{\'e}, Probab. Stat.}, 45(2):352--372.

\bibitem[Campos et~al., 2014]{Camposetal2014}
Campos, M., Surovtsev, I., Kato, S., Paintdakhi, A., Beltran, B., Ebmeier, S.,
  and Jacobs-Wagner, C. (2014).
\newblock A constant size extension drives bacterial cell size homeostasis.
\newblock {\em Cell}, 159(6):1433--1446.

\bibitem[Doumic, 2007]{Doumic07}
Doumic, M. (2007).
\newblock Analysis of a population model structured by the cells molecular
  content.
\newblock {\em Math. Model. Nat. Phenom.}, 2(3):121--152.

\bibitem[Doumic et~al., 2018]{Doumic2018}
Doumic, M., Escobedo, M., and Tournus, M. (2018).
\newblock Estimating the division rate and kernel in the fragmentation
  equation.
\newblock {\em Ann. Inst. H. Poincar\'e{} C Anal. Non Lin\'eaire},
  35(7):1847--1884.

\bibitem[Doumic et~al., 2021]{DoumicEscobedoTournus}
Doumic, M., Escobedo, M., and Tournus, M. (2021).
\newblock An inverse problem: recovering the fragmentation kernel from the
  short-time behaviour of the fragmentation equation.

\bibitem[Doumic and Hoffmann, 2023]{Doumic2023individual}
Doumic, M. and Hoffmann, M. (2023).
\newblock Individual and population approaches for calibrating division rates
  in population dynamics: Application to the bacterial cell cycle.
\newblock In {\em MODELING AND SIMULATION FOR COLLECTIVE DYNAMICS}, pages
  1--81. World Scientific.

\bibitem[Doumic et~al., 2012]{DHRR}
Doumic, M., Hoffmann, M., Reynaud-Bouret, P., and Rivoirard, V. (2012).
\newblock Nonparametric estimation of the division rate of a size-structured
  population.
\newblock {\em SIAM Journal on Numerical Analysis}, 50(2):925--950.

\bibitem[Doumic et~al., 2020]{DoumicOlivierRobert2020}
Doumic, M., Olivier, A., and Robert, L. (2020).
\newblock Estimating the division rate from indirect measurements of single
  cells.
\newblock {\em Discrete and Continuous Dynamical Systems-B}, 25(10):3931--3961.

\bibitem[Doumic et~al., 2009]{Doumic09}
Doumic, M., Perthame, B., and Zubelli, J.~P. (2009).
\newblock Numerical solution of an inverse problem in size-structured
  population dynamics.
\newblock {\em Inverse Problems}, 25(4):045008.

\bibitem[Evans and Steinsaltz, 2007]{Evans07}
Evans, S.~N. and Steinsaltz, D. (2007).
\newblock Damage segregation at fissioning may increase growth rates: A
  superprocess model.
\newblock {\em Theoretical population biology}, 71(4):473--490.

\bibitem[Ferri\`ere and Tran, 2009]{FerriereTran09}
Ferri\`ere, R. and Tran, V.~C. (2009).
\newblock Stochastic and deterministic models for age-structured populations
  with genetically variable traits.
\newblock {\em ESAIM: Proc.}, 27:289--310.

\bibitem[Fromont and Laurent, 2006]{fromontlaurent2006}
Fromont, M. and Laurent, B. (2006).
\newblock Adaptive goodness-of-fit tests in a density model.
\newblock {\em Ann. Statist.}, 34(2):680--720.

\bibitem[Gabriel and Martin, 2019]{Gabriel2019steady}
Gabriel, P. and Martin, H. (2019).
\newblock Steady distribution of the incremental model for bacteria
  proliferation.
\newblock {\em Networks and Heterogeneous Media}, 14(1):149--171.

\bibitem[Hall et~al., 1991]{hallwakegandar}
Hall, A., Wake, G., and Gandar, P. (1991).
\newblock Steady size distributions for cells in one-dimensional plant tissues.
\newblock {\em Journal of Mathematical Biology}, 30(2):101--123.

\bibitem[Heijmans, 1984]{MR764484}
Heijmans, H. J. A.~M. (1984).
\newblock On the stable size distribution of populations reproducing by fission
  into two unequal parts.
\newblock {\em Math. Biosci.}, 72(1):19--50.

\bibitem[Hoang, 2017]{Hoang2017}
Hoang, V.~H. (2017).
\newblock Estimating the division kernel of a size-structured population.
\newblock {\em ESAIM: Probability and Statistics}, 21:275--302.

\bibitem[Hoang et~al., 2022]{HoangPhamNgocRivoirardTran2020}
Hoang, V.~H., Pham~Ngoc, T.~M., Rivoirard, V., and Tran, V.~C. (2022).
\newblock Nonparametric estimation of the fragmentation kernel based on a
  partial differential equation stationary distribution approximation.
\newblock {\em Scandinavian Journal of Statistics}, 49(1):4--43.

\bibitem[Holzmann et~al., 2007]{Holzmann2007}
Holzmann, H., Bissantz, N., and Munk, A. (2007).
\newblock Density testing in a contaminated sample.
\newblock {\em J. Multivariate Anal.}, 98(1):57--75.

\bibitem[Ingster, 1996]{ingster}
Ingster, Y. (1996).
\newblock Minimax testing of the hypothesis of independence for ellipsoids in
  $l^p$.
\newblock {\em Journal of Mathematical Sciences}, 81(1):2406--2420.

\bibitem[Ingster, 2000]{ingster2000}
Ingster, Y. (2000).
\newblock Adaptive chi-square tests.
\newblock {\em Journal of Mathematical Sciences}, 99:1110--1119.

\bibitem[Lacour and Pham~Ngoc, 2014]{LacourPhamNgoc2014}
Lacour, C. and Pham~Ngoc, T.~M. (2014).
\newblock Goodness-of-fit test for noisy directional data.
\newblock {\em Bernoulli}, 20(4):2131--2168.

\bibitem[Laurent et~al., 2011]{Laurent2011}
Laurent, B., Loubes, J.-M., and Marteau, C. (2011).
\newblock Testing inverse problems: a direct or an indirect problem?
\newblock {\em J. Stat. Plann. Inference}, 141(5):1849--1861.

\bibitem[Lindner et~al., 2008]{lindnermaddendemarezstewarttaddei}
Lindner, A., Madden, R., Demarez, A., Stewart, E., and Taddei, F. (2008).
\newblock Asymmetric segregation of protein aggregates is associated with
  cellular aging and rejuvenation.
\newblock {\em Proceedings of the National Academy of Sciences},
  105(8):3076--3081.

\bibitem[Michel, 2006]{Michel06}
Michel, P. (2006).
\newblock Existence of a solution to the cell division eigenproblem.
\newblock {\em Mathematical Models and Methods in Applied Sciences},
  16(supp01):1125--1153.

\bibitem[Perrin et~al., 2025]{perrindoumicelkarouimeleard}
Perrin, A., Doumic, M., Karoui, M.~E., and M\'el\'eard, S. (2025).
\newblock Deciphering the replication-division coordination in {E}. coli: A
  unified mathematical framework for systematic model comparison.
\newblock hal-05191330.

\bibitem[Perthame, 2007]{Perthame07}
Perthame, B. (2007).
\newblock {\em Transport equations arising in biology}.
\newblock Frontiers in Mathematics. Birchkhauser.

\bibitem[Robert et~al., 2014]{LydiaRobert14}
Robert, L., Hoffmann, M., Krell, N., Aymerich, S., Robert, J., and Doumic, M.
  (2014).
\newblock Division in escherichia coli is triggered by a size-sensing rather
  than a timing mechanism.
\newblock {\em BMC biology}, 12(1):1.

\bibitem[Sauls et~al., 2016]{Sauls2016}
Sauls, J., Li, D., and Jun, S. (2016).
\newblock Adder and a coarse-grained approach to cell size homeostasis in
  bacteria.
\newblock {\em Current Opinion in Cell Biology}, 38:38--44.
\newblock Cell architecture.

\bibitem[Si et~al., 2019]{siletreutsaulsvadialevinjun}
Si, F., Treut, G.~L., Sauls, J.~T., Vadia, S., Levin, P.~A., and Jun, S.
  (2019).
\newblock Mechanistic origin of cell-size control and homeostasis in bacteria.
\newblock {\em Current Biology}, 29(11):1760--1770.

\bibitem[Spokoiny, 1996]{spokoiny}
Spokoiny, V. (1996).
\newblock Adaptive hypothesis testing using wavelets.
\newblock {\em The Annals of Statistics}, 24(6):2477--2498.

\bibitem[Stewart et~al., 2005]{Stewart05}
Stewart, E.~J., Madden, R., Paul, G., and Taddei, F. (2005).
\newblock Aging and death in an organism that reproduces by morphologically
  symmetric division.
\newblock {\em PLOS Biology}, 3.

\bibitem[Taheri-Araghi et~al., 2015]{taheriaraghi_etal}
Taheri-Araghi, S., Bradde, S., Sauls, J., Hill, N., Levin, P., Paulsson, J.,
  Vergassola, M., and Jun, S. (2015).
\newblock Cell-size control and homeostasis in bacteria.
\newblock {\em Current Biology}, 25(3):385--391.

\bibitem[Tiruvadi-Krishnan et~al.,
  2022]{tiruvadikrishnanmannikkarlinamirmannik}
Tiruvadi-Krishnan, S., nad P.~Kar, J.~M., Lin, J., Amir, A., and M\"{a}nnik, J.
  (2022).
\newblock Coupling between {D}{N}{A} replication, segregation, and the onset of
  constriction in {E}scherichia coli.
\newblock {\em Cell reports}, 38(12):110539.

\bibitem[Witz et~al., 2019]{witzvannumwegenjulou}
Witz, G., van Nimwegen, E., and Julou, T. (2019).
\newblock Initiation of chromosome replication controls both division and
  replication cycles in {E}. coli through a double-adder mechanism.
\newblock {\em Elife}, 8:e48063.

\end{thebibliography}

{\footnotesize

}

\appendix

\section{Eigenvalue problem associated with the adder model}\label{app:existenceunicite-vp}

We consider the adder model with $g(x)=x^\gamma$ and $B(a)=a^\eta$, for $\gamma,\eta>0$, where we omitted the multiplicative constants for the sake of simplicity. Also, we consider the case where $\kappa(d\theta)=h(\theta)\, d\theta$. The case $\kappa(d\theta)=\delta_{1/2}$ has been checked in \cite{DoumicOlivierRobert2020}. Thus, let us consider the eigenvalue problem:

\begin{align}\label{eigenvalue_problem}
& \partial_x \big(x^\gamma N(a,x)\big)+ \partial_a \big( x^\gamma N(a,x)\big) + \big(\lambda+a^\eta x^\gamma\big) N(a,x)=0,\\
& x^\gamma N(0,x)= 2 \int_{0}^\infty \int_0^\infty a^\eta  y^{\gamma-1} h\Big(\frac{x}{y}\Big) N(a,y)\, dy\, da,\label{boundary_eigenvaluepb}\\
& - x^\gamma \partial_x \phi(a,x) -  x^\gamma \partial_a \phi(a,x) + \big(\lambda+a^\eta\big) \phi(a,x)=2 a^\eta x^{\gamma-1} \int_0^{+\infty} \phi(0,y) h\Big(\frac{y}{x}\Big)\, dy ,\\
& \int_0^{+\infty} \int_0^{+\infty} N(a,x)\, da\, dx=1,\qquad \int_0^{+\infty} \int_0^{+\infty} \phi(a,x) N(a,x)\, da\, dx=1.\label{eigenvalue_problem:app4}
\end{align}

\tvc{\begin{proposition}
    There is existence and uniqueness of a non-negative solution to the eigenvalue problem \eqref{eigenvalue_problem}-\eqref{eigenvalue_problem:app4}. For this solution,
    \begin{equation}
\int_0^{+\infty}\int_0^{+\infty} g(x) N(a,x)B(a)\ da\  dx<+\infty.\end{equation}
\end{proposition}
}

%Note that \eqref{boundary_eigenvaluepb} implies that for almost everywhere in $(a,x)$, $y\mapsto y^{\gamma-1}h(x/y) N(a,y)$ is integrable on $\R_+$ and therefore, this function converges to $0$ when $y\rightarrow +\infty$.

%\tvc{Assume that $\lim_{x\rightarrow +\infty} x^{\gamma+1} n(t,a,x)=0$.}
%Multiplying \eqref{EDP:adder} by $x$ and integrating in $(a,x)\in \R_+^2$, we obtain that:
%\begin{align}
%    \frac{d}{dt} \int_0^{+\infty} \int_0^{+\infty} x \, n(t,a,x)\ da\, dx = 
%\end{align}

\begin{proof}
Integrating \eqref{eigenvalue_problem} in $(a,x)\in \R_+^2$, and using \eqref{boundary_eigenvaluepb}, Fubini's theorem and the fact that $\int_0^1 h(\theta)d\theta=1$, we obtain:
\begin{equation}
    \lambda =  \int_0^{+\infty} \int_0^{+\infty} a^\eta y^\gamma N(a,y)\, dy\, da.
\end{equation}Note that in the case where $\gamma=1$, the total size $L(t)=\int_0^{+\infty} \int_0^{+\infty} x \, n(t,a,x)\ da\, dx $ in \eqref{EDP:adder} satisfies 
   $ \frac{d}{dt} L(t)=L(t)$, providing $\lambda=1$ (see \cite{Gabriel2019steady}), but this does not hold for $\gamma\not=1$.\\

 Following the steps of \cite{Gabriel2019steady} and defining $M(a,s)=N(a,a+s)$ where $s$ is the size at birth, we can see by computing the derivatives of $M$ and $N$ that \eqref{eigenvalue_problem} rewrites as:
 \begin{equation}
     \partial_a\big((a+s)^\gamma M(a,s)\big) + \big(\lambda + (a+s)^\gamma a^\eta\big) M(a,s)=0.\label{eigenvalue_problem_newvariable}
 \end{equation}For $s$ considered as a fixed parameter, \eqref{eigenvalue_problem_newvariable} is an ordinary differential equation whose solution writes generically as:
 \begin{equation}\label{generalsolutionODE}
     M(a,s)= M(0,s) \times \begin{cases} 
     \Big( \frac{s}{a+s}\Big)^2 e^{-\frac{a^{\eta+1}}{\eta+1} } & \mbox{ if }\gamma=1,\\
     \Big( \frac{s}{a+s}\Big)^\gamma e^{-\lambda\frac{(a+s)^{1-\gamma}-a^{1-\gamma}}{1-\gamma}} e^{-\frac{a^{\eta+1}}{\eta+1} } & \mbox{ if }\gamma\not= 1.
     \end{cases}
 \end{equation}
To obtain a solution of \eqref{eigenvalue_problem}-\eqref{boundary_eigenvaluepb}, it remains to find $M(0,s)$, which amounts in considering the boundary condition \eqref{boundary_eigenvaluepb} which becomes in the new variable $M(.,.)$:
 \begin{equation}
     s^\gamma M(0,s) = 2 \int_s^{+\infty} \int_0^y a^\eta y^\gamma \frac{1}{y}h\Big(\frac{s}{y}\Big) \, M(a,y-a) da\, dy.\label{boundary_newvariable}
 \end{equation}
 
Plugging \eqref{generalsolutionODE} in \eqref{boundary_newvariable} we obtain a fix point equation for $M(0,s)$. \\

\noindent \textbf{For $\gamma=1$:} Multiplying both sides by $s\geq 0$ and defining $f(s):=s^2 M(0,s)$, gives: 
\begin{align}
&     f(s) = 2 \int_s^{+\infty}\int_0^y a^\eta y^{-1} f(y-a)  e^{-\frac{a^{\eta+1}}{\eta+1}} \ da\ \frac{s}{y}h\Big(\frac{s}{y}\Big) dy\nonumber\\ 
\Leftrightarrow \qquad &     f(s) = 2 \int_0^1 \int_0^{s/\theta}  f(\alpha)  \Phi\Big(\frac{s}{\theta}-\alpha\Big) \ d\alpha\ h(\theta) d\theta,\label{eq:GabrielMartin}
\end{align}where \begin{equation}\label{def:Phi}\Phi(a)=a^\eta \exp\big(-\frac{a^{\eta+1}}{\eta+1}\big),\end{equation}
is a probability density on $\R_+$ corresponding to the lifetime distribution before division. This equation \eqref{eq:GabrielMartin} is similar to the one in \cite{Gabriel2019steady}. It can be rewritten as $f(s)=Tf(s)$ where $T$ is a transition operator linking the distribution of sizes at birth over successive generations. Notice that for $B(a)=a^\eta$, the survival function of the time before division is $\Psi(a)=\exp\big(-a^{\eta+1}/(1+\eta)\big)$ that satisfies the Assumptions of \cite[Theorem 1.1]{Gabriel2019steady}, in particular $\Psi(a)=O(a^{-k_0})$ for $a\rightarrow +\infty$ and for any $k_0>0$. The operator $T$ is a continuous linear operator on $L^1(\R_+)$ and Gabriel and Martin showed that there exists a unique solution in $L^1$ of this equation, that is the unique fixed point of $T$. \\
Moreover, 
\begin{align}
    \int_0^{+\infty} \int_0^{+\infty} x^\gamma a^{\eta}  N(a,x) \, da\ dx= &     \int_0^{+\infty} \int_0^{+\infty} x^\gamma a^{\eta}  M(a,x-a) \, \ind_{x\geq a}\, da\ dx\nonumber\\
    = &    \int_0^{+\infty} \int_0^{+\infty} (s+a)^\gamma a^{\eta}  M(a,s) \, ds\ da\nonumber\\
    = & \int_0^{+\infty} \int_0^{+\infty}  (s+a)^\gamma a^\eta \,  M(0,s)      \Big( \frac{s}{a+s}\Big)^2 e^{-\frac{a^{\eta+1}}{\eta+1} }  \, ds\, da\nonumber\\
    = & \int_0^{+\infty} s^2 M(0,s) \int_0^{+\infty}  (s+a)^{\gamma-2} a^\eta        e^{-\frac{a^{\eta+1}}{\eta+1} }  \, da\, ds,\label{etape2}
\end{align}
where we used the definition of $M$ and \eqref{generalsolutionODE}. In the right hand side, we recognize the Weibull distribution with shape parameter $\eta+1$, $a^\eta \exp(-a^{\eta+1}/(\eta+1))$, and therefore, for any $s>0$, the integral in $a$ is finite and can be bounded by $C s^{\gamma-2}$ in every case (whatever the sign of $\gamma-2$). As a result, \eqref{etape2} above can be upper bounded by $\int_0^{+\infty} s^\gamma M(0,s)ds$, which is finite by the properties of $T$ (see \cite[Lemma 2.2]{Gabriel2019steady}). This proves that \eqref{hyp-moment} holds in this case.\\

\noindent \textbf{For $\gamma\not=1$:} Proceeding similarly, with $g(s)=s^\gamma M(0,s)$,
\begin{align}
     & s^\gamma M(0,s) = 2 \int_s^{+\infty} \int_0^y a^\eta y^\gamma \frac{1}{y}h\Big(\frac{s}{y}\Big) \, M(0,y-a)  \Big(\frac{y-a}{y}\Big)^{\gamma} e^{-\lambda\frac{y^{1-\gamma}-a^{1-\gamma}}{1-\gamma}}  e^{-\frac{a^{\eta+1}}{\eta+1} }  da\, dy\nonumber\\
    \Leftrightarrow \qquad  & g(s) = 2 \int_s^{+\infty} \int_0^y  g(y-a) e^{-\lambda\frac{y^{1-\gamma}-a^{1-\gamma}}{1-\gamma}} \, a^{\eta} e^{-\frac{a^{\eta+1}}{\eta+1} }  da\  \frac{1}{y}h\Big(\frac{s}{y}\Big) dy\nonumber\\
    \Leftrightarrow \qquad  & g(s) = 2 \int_0^1 \int_0^{s/\theta}  g(\alpha) e^{-\lambda\frac{(s/\theta)^{1-\gamma}-(s/ \theta-\alpha)^{1-\gamma}}{1-\gamma}} \, \Phi\Big(\frac{s}{\theta}-\alpha\Big)  d\alpha\   \frac{1}{\theta} h(\theta) d\theta =: Sg(s), \label{def:S}
 \end{align}where $\Phi$ has been defined in \eqref{def:Phi}. Notice that for $\alpha \in (0,s/\theta)$ and whatever the sign of $1-\gamma$,
 \[0< \exp\Big(-\lambda\frac{(s/\theta)^{1-\gamma}-(s/ \theta-\alpha)^{1-\gamma}}{1-\gamma}\Big)  \leq 1.\]
 The existence and uniqueness of a nonnegative solution to \eqref{def:S} can be obtained from the existence of an eigenvector of $S$ associated to the eigenvalue 1. This can be obtained by an adaptation of the proof in \cite[Theorems 3.1, 3.3]{Gabriel2019steady}.
 \par Also a similar computation as in \eqref{etape2} leads to $\int_0^{+\infty} s^\gamma M(0,s)\, ds$ whose finiteness can be proven be extending Lemma 2.2 in \cite{Gabriel2019steady}.
\end{proof} 
\end{document}